\documentclass[tikz,12pt]{article}
\usepackage{mheckII}

\usepackage[T1]{fontenc}

\usepackage{tikz-feynman}

\usepackage{manfnt}
\usepackage{longtable}

\usepackage{fancybox}

\setlength{\fboxsep}{14pt}
\setlength{\fboxrule}{1.2pt}

\usepackage{multirow}

\usepackage{blkarray}

\usepackage{tikz} 
\usetikzlibrary{matrix}

\theoremstyle{theorem}

\newtheorem{fact}{Fact}

\newtheorem{corl}{Corollary}
\newtheorem{pro}{Proposition}

\newtheorem*{conj}{Sub-conjecture}

\newtheorem{lem}{Lemma}
\newtheorem*{lem*}{Lemma}

\theoremstyle{definition}

\newtheorem*{condi}{Condition}

\newtheorem{rem}{Remark}

\newtheorem{exe}{Example}[section]

\def\Dsl{\,\raise.15ex\hbox{/}\mkern-13.5mu D}
\def\dsl{\,\raise.25ex\hbox{/}\mkern-10.5mu \partial}



\title{A Property of Geodesics\\ in Special K\"ahler Geometry 
}

\authors{Sergio Cecotti\footnote{e-mail: {\tt cecotti@bimsa.cn}}\vskip 9pt

\centerline{Beijing Institute of Mathematical Sciences and Applications (BIMSA)}
\centerline{Huaibei Town, Huairou District, Beijing 101408, China,}
\centerline{Qiuzhen College, Tsinghua University, Beijing, China,}
\centerline{SISSA,
via Bonomea 265, Trieste, Italy}
}

\abstract{We study the stable geodesics of the QFT special K\"ahler geometry ($\equiv$ Seiberg-Witten geometry of 4d $\cn=2$ QFT) using the Myers argument. Complete stable geodesics are quite restricted, and can be described very explicitly. In particular no closed stable geodesic exists. We comment on the application of the Myers method to related problems, including geodesics in moduli spaces of Calabi-Yau 3-folds.}

\begin{document}

\maketitle

\tableofcontents


\section{Introduction}

In this note we focus on the special K\"ahler geometry which describes the vector-multiplet sector of a 4d $\cn=2$ QFT, a.k.a.\! Seiberg-Witten geometry \cite{R1,R2,R3,R4,R5,R6,R7,R7.5,R8}. It should not be confused with the special K\"ahler geometry (sometimes called `projective')  which applies in presence of gravity, i.e.\! in $\cn=2$ SUGRA \cite{R5,R8,R9,R10,R11}, which is equivalent to the Weil-Petersson (WP) geometry on the moduli space of Calabi-Yau 3-folds \cite{R5,R8,R9,R10,R11,R12}.

A special K\"ahler manifold $\cm$ carries several geometric 
structures, and therefore can be studied from a variety of viewpoints.\footnote{\ For instance: as holomorphic integrable systems \cite{R7,R7.5}, or in terms of
variations of Hodge structure (VHS) \cite{R7,R10}, from an algebro-geometric viewpoint \cite{R13},
or using techniques from diophantine geometry \cite{R13,R14}, etc.}
The Riemannian aspects of these geometries -- such as their geodesics --
 have been less studied because, while all other 
structures are well behaved, the K\"ahler metric looks not too nice:  except for ``trivial'' situations,
it is \emph{singular} and \emph{non-complete}. 
Nevertheless the Riemannian aspects of special geometry
 have a direct bearing on the physics of $\cn=2$ QFTs. Physical intuition suggests
 that its geodesics enjoy distinctive properties, and one wishes to confirm (or refine)
 these expectations by direct geometric proofs.
 
In the much deeper context of `projective' special K\"ahler geometries which describe
 4d $\cn=2$ SUGRAs arising from  Calabi-Yau compactifications
of Type IIB, it was conjectured in ref.\!\cite{R15} that two distinct geodesics cross at most at one point when pulled back to the smooth simply-connected cover $\mathscr{X}$ of the moduli space $\cm$. The statement
was argued from a deep Quantum Gravity perspective in the context of the swampland program \cite{R15a,R15b,R15c,R15d,R15e,R15f}.
In this note we mainly consider the situation in absence of gravity where the issue is much simpler. 
However the proofs are geometrically elegant so, perhaps, worth reporting in writing.
At the end of the paper we shall briefly mention the kind of results one obtains for the gravitational Weil-Petersson geometry using our methods: while not very powerful, they may still be of interest.

\medskip

Let us motivate the geometric problem from some physical heuristics.
Suppose we have a 4d $\cn=2$ QFT which at low energy is effectively described  by
a special K\"ahler manifold $\cm$ of the kind which ``arises in physics''. 
We may wish to compactify our QFT on a circle $S^1$ to construct
a vacuum configuration of a 3d theory. When the vector-multiplet scalars are non-constant along the circle, this field configuration is a non-trivial map $S^1\to \cm$.
To be a solution of the equations of motion this map should be a geodesics of $\cm$.

One may be more or less conservative about which class of geodesics is admissible in this construction. 
In this regard there are \emph{three} issues to consider. First issue:
 to be a semiclassical vacuum it is not enough that the configuration is a stationary solution of the equations of motion, i.e.\! that the first variation of the action vanishes. This insufficiency is well known
for models with potentials of the form
\be
V(\phi)=\lambda(\phi^2-a^2)^2
\ee  
where the first variation $V^\prime(\phi)$ vanishes for three static configurations, $\phi=0$ and $\phi=\pm a$,
but the first one is not a valid vacuum since it is unstable (it contains tachyons) and should be ruled out.
Two conditions should be satisfied to have a semiclassical vacuum: (1) the first variation of the action should vanish, \emph{and} (2) the second
variation must be positive-definite. This criterion applies also to 3d compactifications:
the conservative position is that
the first variation of the action should vanish, i.e.\! $\phi\colon S^1\to \cm$ must be a geodesic,
but also the second variation must be positive semi-definite, i.e.\! the geodesic must be \emph{stable,} again to avoid tachyons. 
Second issue: one may or not allow the geodesic to pass through a bad\footnote{\ By \emph{bad} singularity we mean a point where the curvature diverge. Milder kind of singularities (such as finite quotient orbifolds) are accepted.} singular point in $\cm$.
The conservative position is, of course, to disregard the singular configurations.
Third issue: the moduli space $\cm$ has the form\footnote{\ $\cm$ is the complement of the divisors in the Coulomb branch with local monodromy of infinite order. It is an orbifold with mild quotient singularities.}
\be
\cm=\mathscr{X}/\Gamma
\ee
where $\mathscr{X}$ is a smooth contractible complex manifold (the ``marked moduli space'' in the language of \cite{R15})
and $\Gamma\subset Sp(2r,\Z)$ is the monodromy group\footnote{\ $r$ is the complex dimension of the special K\"ahler manifold.} which acts on $\mathscr{X}$ by (holomorphic) isometries\footnote{\ The action may have fixed points, but the isotropy group of each point is at most finite.}
and on the lattice $\Lambda_\text{e.m}$ of electro-magnetic
charges by generalized electro-magnetic dualities \cite{R8}. A closed geodesic $\gamma$ in $\cm$
may lift to an open one in $\mathscr{X}$. If this is the case, when we go around the circle $S^1$ 
we get back to the starting point $p_0\in\cm$ with a non-trivial redefinition of what we mean by electric and magnetic charges. This corresponds to inserting
a ``duality defect'' in the circle. The most conservative attitude is to limit ourselves to
constructions of ``3d vacua'' without ``duality defects''. Of course, one may (or even should)
be more general and allow for such non-trivial insertions. If such constructions exist, the physics of the resulting 3d effective theory is quite subtle and not covered by the most naive heuristics we are advocating.

Thus, taking the most conservative stance, we consider non-trivial 3d vacua of the 4d $\cn=2$ QFT
 produced by stable geodesics which are closed in the covering space $\mathscr{X}$.
 The physical heuristics is that we do \emph{not} expect a vacuum with these properties to exist,
 except -- possibly -- when a ``duality defect'' is inserted, and also in that case only under
 very special circumstances which often require a larger supersymmetry $\cn\geq3$.
Physics thus makes 
 a geometric prediction:
in the covering space $\mathscr{X}$ there is no stable closed geodesic, and
even on the actual moduli $\cm$ such geodesics are extremely restricted, and their existence
requires peculiar properties which are just false for the \emph{generic} QFT.
This note is dedicated to proving a stronger property of the geodesics in special K\"ahler geometry   
(more in the spirit of \cite{R15}): 

\begin{pro}\label{mainP} In special K\"ahler geometry
two distinct stable complete geodesics of the covering space $\mathscr{X}$ cross at most once.
In particular there are no closed stable geodesics.
\end{pro}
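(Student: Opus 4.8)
The plan is to run the Myers second–variation argument along a complete stable geodesic, fed with the curvature identity characterising rigid (QFT) special K\"ahler geometry: on such a manifold the Riemann tensor is controlled entirely by the holomorphic cubic form (Yukawa coupling) $C_{ijk}$ through $R_{i\bar jk\bar l}=-\,g^{m\bar n}C_{ikm}\overline{C_{jln}}$. For a unit–speed geodesic $\gamma$ with velocity $u$ and a normal variation field $V$, the index form is $I(V,V)=\int\bigl(|\nabla_tV|^2-\langle R(V,u)u,V\rangle\bigr)\,dt$; unwinding the curvature term and writing $\mathsf u=u^{(1,0)}$, $\mathsf v=V^{(1,0)}$, one gets, up to a fixed positive constant, $\langle R(V,u)u,V\rangle=\mathrm{Re}\,\langle C(\mathsf v,\mathsf v),\,C(\mathsf u,\mathsf u)\rangle-\|C(\mathsf v,\mathsf u)\|^2$, where $C(\,\cdot\,,\,\cdot\,)$ is the cubic form contracted on two $(1,0)$–vectors and $\langle\,\cdot\,,\,\cdot\,\rangle$, $\|\cdot\|$ the Hermitian structure on $(1,0)$–covectors induced by $g$. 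The key structural point: taking $V=Ju$, which is automatically parallel along $\gamma$ since $J$ is, reproduces the holomorphic sectional curvature, manifestly $\le 0$; but for generic mixed real $2$–planes the right–hand side can be \emph{strictly positive}, and this is exactly the regime the Myers mechanism exploits.

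If $\gamma$ is stable then $I(V,V)\ge 0$ for every compactly supported normal $V$; in particular $\gamma$ carries no pair of conjugate points. For a complete $\gamma$ I would then use the ``spreading plateau'' form of Myers: with $V=f_T\,W$, $W$ a fixed parallel normal field and $f_T$ equal to $1$ on $[-T,T]$ and ramping linearly to $0$ on $[\pm T,\pm 2T]$, one has $\int f_T'^2\,dt\to 0$ as $T\to\infty$, so stability forces $\int_{\mathbb R}\langle R(W,u)u,W\rangle\,dt\le 0$ for every parallel normal $W$, and forbids this curvature term from staying bounded below by a positive constant along all of $\gamma$. Feeding in the $C$–expression and choosing $W$ adapted at one point to $C(\mathsf u,\mathsf u)$ (raise its index by $g$, project off $\mathsf u$, parallel–transport), one should be able to conclude that $C(\mathsf u,\mathsf u,\,\cdot\,)\equiv 0$ all along $\gamma$; then the curvature identity gives $R(\,\cdot\,,u)u\equiv 0$ along $\gamma$, so the Jacobi equation degenerates to $\nabla_t^2 J=0$, exactly as in flat space.

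The vanishing of the curvature operator in the velocity direction, combined with the local structure theory of special K\"ahler metrics with degenerate cubic form, should confine $\gamma$ to a totally geodesic submanifold $N\subseteq\mathscr X$ on which the metric is \emph{flat} — an open piece of Euclidean space carrying the affine structure of special coordinates, along whose straight lines the geodesics run. On such an $N$ two distinct complete geodesics meet at most once and none is closed; to conclude for an arbitrary pair $\gamma_1\ne\gamma_2$ crossing at $p$, one uses that the flat totally geodesic leaf through $p$ is unique (it is the leaf of the canonically defined ``flat distribution'' $\{v:C(v,\,\cdot\,,\,\cdot\,)=0\}$), so both $\gamma_i$ lie in it; two distinct straight lines in a Euclidean space meet at most once, and a closed straight line does not exist. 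Alternatively one argues directly that a second crossing at $q$, via the geodesic homotopy joining the two arcs inside the common flat leaf, would produce a Jacobi field vanishing at $p$ and $q$ — a conjugate pair, contradicting stability. The trivial case $C\equiv 0$ (flat metric) is already explicit.

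The main obstacle is the middle step: actually forcing $C(\mathsf u,\mathsf u,\,\cdot\,)\equiv 0$ out of stability. Positivity of a mixed sectional curvature at a single point of $\gamma$ does not by itself break stability — the positive term must persist over a definite arclength, or dominate the non‑positive holomorphic contributions in the $t$–integral — so the choice of test field (perhaps a non‑parallel field modelled on the raised covector $C(\mathsf u,\mathsf u)^\sharp$) and the handling of loci where the cubic form drops rank will need care, as will establishing the integrability and flat totally geodesic nature of the limiting locus $N$. A secondary, structural difficulty, noted in the paper itself, is that the special K\"ahler metric is incomplete, so Hopf--Rinow and the global Cartan--Hadamard theorem are unavailable — the argument must remain local along $\gamma$ and use only completeness of the geodesic, which is precisely why the spreading‑plateau version of Myers, rather than its compactness conclusion, is the right instrument.
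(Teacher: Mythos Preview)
Your Myers second--variation instinct is right, but there are two genuine gaps and one place where the paper's route is structurally cleaner.

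\medskip

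\textbf{The middle step does not close.} You aim for $C(\mathsf u,\mathsf u,\cdot)\equiv 0$ by testing stability against a single parallel field $W$ tailored to $C(\mathsf u,\mathsf u)^\sharp$ at one point. As you yourself flag, the sign--indefinite term $\mathrm{Re}\,\langle C(\mathsf w,\mathsf w),C(\mathsf u,\mathsf u)\rangle$ is not forced to dominate $\|C(\mathsf w,\mathsf u)\|^2$ in the $t$--integral, so nothing follows. The paper avoids this by summing over a full parallel orthonormal normal frame: the curvature term becomes $\mathrm{Ric}(\dot\gamma,\dot\gamma)$, which is \emph{non--negative} in rigid special K\"ahler geometry. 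Combined with the identity $\mathsf{Ric}_{\mathscr X}=\tau^*\omega_{\mathscr H}$ (the Ricci form is the pullback of the K\"ahler form on Siegel upper half--space via the period map $\tau$), the Myers inequality with test functions $\cos(\pi s/2L)X_i$ yields an explicit upper bound on the distance $\boldsymbol{d}\big(\tau(p_1),\tau(p_2)\big)$ in the Hadamard space $\mathscr H$ that tends to $0$ as $L\to\infty$. Hence $\tau$ is constant on $\gamma$: the geodesic sits in a single $\tau$--fiber. Equivalently, $\mathrm{Ric}(\dot\gamma,\dot\gamma)=\|Y(\dot\gamma,\cdot,\cdot)\|^2=0$, i.e.\ $C(\mathsf u,\cdot,\cdot)\equiv 0$ along $\gamma$ --- strictly stronger than your target.

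\medskip

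\textbf{Your implication $C(\mathsf u,\mathsf u,\cdot)=0\Rightarrow R(\cdot,u)u=0$ is false.} By your own formula, once $C(\mathsf u,\mathsf u,\cdot)=0$ one is left with $\langle R(V,u)u,V\rangle=-\|C(\mathsf v,\mathsf u)\|^2$, which vanishes for all $V$ only if $C(\mathsf u,\cdot,\cdot)=0$. So the Jacobi operator does \emph{not} trivialise under your hypothesis, and the ``flat leaf'' picture is not yet available.

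\medskip

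\textbf{The last step needs a weaker input than you assume.} You invoke that the leaf $\{v:C(v,\cdot,\cdot)=0\}$ through $p$ is flat and totally geodesic. The paper neither proves nor uses this (it shows the $\tau$--fiber is flat \emph{iff} totally geodesic, but not that either holds in general). What it \emph{does} prove is that the ambient holomorphic tangent bundle restricted to the $\tau$--fiber is flat for the ambient Levi--Civita connection: $V^iR_{i\bar jk\bar l}=0$ whenever $V$ is tangent to the fiber. On the simply--connected fiber this trivialises $i_{\tau(p)}^*T\mathscr X\simeq\mathscr X_{\tau(p)}\times\mathbb C^r$ by parallel sections. The velocities $\dot\gamma_1,\dot\gamma_2$ are parallel in this connection, hence correspond to constant vectors $v_1\neq v_2$ in $\mathbb C^r$; having diverged at $p$ they cannot reconverge. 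This replaces your ``two straight lines in Euclidean space'' argument without ever asserting that the fiber is intrinsically flat.
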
  

We stress that the adjective ``stable'' cannot be omitted from the statement. By definition a non-stable geodesic
crosses multiple times a nearby geodesic, so the qualification ``stable'' could be omitted if and only if \emph{all} complete geodesics are automatically stable. This is not the case: indeed ``most'' of the
complete geodesics are unstable.    

\begin{rem} In the Quantum Gravity context of the special geometry arising from the WP geometry of Calabi-Yau moduli, the authors of \cite{R15}  state a \textbf{Conjecture} which is the above \textbf{Proposition} without the qualifications ``stable'' and ``complete''. In particular this implies:
\begin{conj} 
All Weil-Petersson geodesics on the moduli space of a CY 3-fold are stable.
\end{conj}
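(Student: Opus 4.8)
The plan is to derive the Sub-conjecture directly from the Conjecture of \cite{R15}, working upstairs on the smooth simply-connected cover $\mathscr{X}$ of the Calabi--Yau moduli space $\cm$, where that Conjecture asserts that any two distinct geodesics meet in at most one point. Since stability of a geodesic is a purely local property of the second variation of the length functional, it is invariant under the covering projection $\mathscr{X}\to\cm$; hence it suffices to show that every Weil--Petersson geodesic of $\mathscr{X}$ is stable. I would argue by contraposition: assuming some geodesic $\gamma$ is \emph{un}stable, I would produce two distinct geodesics of $\mathscr{X}$ that cross twice, contradicting the hypothesised Conjecture.

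The first step translates instability into the language of conjugate points. A geodesic $\gamma$ is stable precisely when its index form
\[
I(V,V)=\int\Big(|\nabla_t V|^2-\langle R(V,\dot\gamma)\dot\gamma,\,V\rangle\Big)\,dt
\]
is positive semi-definite on every proper variation field $V$ vanishing at the endpoints of the segment under consideration. By the Jacobi--Morse index theorem, $I$ fails to be positive semi-definite exactly when the interior of $\gamma$ contains a point $q=\gamma(t_1)$ conjugate to an earlier interior point $p=\gamma(t_0)$, i.e.\! when there is a nontrivial Jacobi field $J$ along $\gamma$ with $J(t_0)=J(t_1)=0$. Thus an unstable $\gamma$ supplies such a conjugate pair $(p,q)$, both lying in the interior of the geodesic and hence safely inside $\mathscr{X}$, away from the incomplete Weil--Petersson boundary.

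The second step converts the infinitesimal focusing carried by $J$ into a genuine second crossing. Because $J(t_0)=0$, I can realise $J$ as the variation field of a one-parameter family $\Gamma(t,s)=\exp_p\!\big(t\,w(s)\big)$ of geodesics all issuing from $p$, with $w(0)=\dot\gamma(t_0)$; every member of this family already meets $\gamma$ at $p$. The vanishing $J(t_1)=0$ says precisely that $q$ is a critical value of $\exp_p$, so $\exp_p$ is not a local diffeomorphism there. Analysing the local caustic structure of $\exp_p$ near the conjugate point then produces two distinct members of the family that refocus at a common point $q'$ arbitrarily close to $q$. These two geodesics cross both at $p$ and at $q'$ --- two intersections --- which is forbidden by the Conjecture of \cite{R15}. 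Hence no unstable geodesic can exist, every Weil--Petersson geodesic on $\mathscr{X}$ is stable, and the conclusion descends to $\cm=\mathscr{X}/\Gamma$.

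The main obstacle is exactly this last realisation step: upgrading the vanishing Jacobi field, which is an infinitesimal first-order statement, to an honest pair of distinct geodesics sharing two transverse intersection points. For a simple, fold-type conjugate point this is the classical caustic picture of $\exp_p$, but degenerate or high-multiplicity conjugate points require a careful normal-form or perturbation analysis of the singularity of the exponential map, together with a check that the refocusing point $q'$ and the competing geodesics are genuinely distinct rather than reparametrisations of one another. A secondary difficulty is that the Weil--Petersson metric is incomplete and its curvature is controlled only in special directions, so the entire construction must be localised to a relatively compact neighbourhood of the interior conjugate pair $(p,q)$, where the Jacobi equation and $\exp_p$ are well-behaved; this is where one genuinely uses that $p$ and $q$ are \emph{interior} points of $\gamma$.
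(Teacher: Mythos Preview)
The paper does not prove this statement: it is presented explicitly as a \textbf{Sub-conjecture}, not a theorem. The only argument the paper offers is the one-line remark (just before \textbf{Proposition~\ref{mainP}}) that ``by definition a non-stable geodesic crosses multiple times a nearby geodesic'', from which the implication \emph{Conjecture of \cite{R15} $\Rightarrow$ Sub-conjecture} is asserted without further detail. Your proposal is exactly a fleshed-out version of this same contrapositive, so in approach you agree with the paper; the difference is that you are candid about the step the paper elides.

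The gap you flag is genuine and is not addressed in the paper either. A vanishing Jacobi field at $p$ and $q$ only says that $d(\exp_p)$ is singular at the relevant tangent vector; it does \emph{not} by itself produce two distinct geodesics through a common second point (the map $x\mapsto x^3$ is singular at $0$ yet injective). What saves the argument in the Riemannian setting is a theorem of F.~W.~Warner (\emph{The conjugate locus of a Riemannian manifold}, Amer.\ J.\ Math.\ \textbf{87} (1965), 575--604): the exponential map $\exp_p$ is \emph{never} locally injective in a neighbourhood of a tangent vector corresponding to a conjugate point, regardless of the multiplicity or degeneracy of the Jacobi field. This supplies, for every conjugate pair, two genuinely distinct geodesics from $p$ meeting again near $q$, and disposes of your ``main obstacle'' in one stroke, including the degenerate cases you worry about. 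Your secondary concern about incompleteness is handled, as you say, by the fact that $p$ and $q$ are interior points, so $\exp_p$ is defined on an open neighbourhood of the relevant segment in $T_p\mathscr{X}$ and Warner's local statement applies there.

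In short: your outline is the right one and matches the paper's intended (but unwritten) argument; the missing ingredient is Warner's non-injectivity theorem for $\exp_p$ at conjugate points, which replaces your caustic/normal-form analysis and works uniformly without genericity assumptions.
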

While in the QFT case ``most'' (complete) geodesics are \emph{un}stable, the situation in the gravity case looks the opposite way: the geodesics have a ``tendency'' to be stable, and the above \textbf{Sub-conjecture} may well be true. (For non-complete geodesics the situation improves: a geodesic may terminate before becoming unstable \cite{R15}).
\end{rem}

The rest of this note is purely Differential-Geometric. It is organized as follows: in section 2 we review some known and less known elementary facts about the Riemannian aspects of special geometry (in the sense of $\cn=2$ QFT),
and  in section 3 we describe explicitly the stable complete geodesics on $\mathscr{X}$ (and $\cm$).
We briefly comment about other results one gets by the same technique, including in the gravitational case of the Weil-Petersson geometry. In the appendix we discuss the geodesics in the vicinity of a finite-distance singularity arising in complex codimension-1.

\section{Elementary facts}

In this section we collect elementary properties of the special K\"ahler geometries which arise from physics. These results 
are not new, but perhaps not widely known, and some of them have not appeared in print before. We present them from the Riemannian viewpoint.

\medskip

To make the story short and simple, we work directly on the simply-connected cover $\mathscr{X}$,
thus trivializing all global aspects of the geometry. However, to avoid meaningless situations,
we focus on the geometries which ``arise from physics'', that is -- technically -- we assume
the 

\begin{condi}\label{first} Let $r\equiv\dim_\C\mathscr{X}$. There is a discrete group $\Upsilon\subset Sp(2r,\Z)$, freely
acting on $\mathscr{X}$ by holomorphic isometries of its special K\"ahler metric, such that
the smooth K\"ahler manifold
\be
\cn\equiv \mathscr{X}/\Upsilon
\ee
is (biholomorphic to) a non-compact quasi-projective algebraic variety.  $\Upsilon\triangleleft \Gamma$ is a finite-index, torsion-free,
normal subgroup of the monodromy group $\Gamma$ and $\cn\to\cm$ is a smooth finite cover of the moduli space $\cm$. The local monodromies at infinity in $\cn$ are unipotent.
Under these conditions, there is a torsion-free \emph{arithmetic} group $U\subset Sp(2r,\Z)$ with $\Upsilon \subset U$.
\end{condi}

\textbf{Condition} is the geometric counterpart to UV-completeness for the underlying $\cn=2$ QFT.
In the geometric analysis it replaces the assumption of \emph{geodesic completeness} which cannot hold unless the special geometry is flat (see below). 

\subsection{Properties of the Riemann tensor}\label{curvv}

We recall the relevant definitions. The Riemann tensor $R_{i\bar j k\bar l}$ of a K\"ahler manifold is
\emph{Nakano non-negative} iff for all
tensors $u^{ij}$
\be
R_{i\bar j k\bar l}\,u^{ik}\bar u^{\bar j\bar l}\geq0.
\ee
 Nakano non-negativity implies Griffiths non-negativity, non-negative holomorphic bisectional
 curvatures, and  
 non-negative Ricci tensor
(see e.g.\! \cite{R16,R21}). A smooth function $f$ on a complex manifold is said to be
\emph{pluri-subharmonic} iff the $(1,1)$ part of its Hessian is everywhere a non-negative matrix, i.e.
\be
\partial_i\mspace{1mu}\partial_{\bar j}f\geq0.
\ee
When $\mathscr{X}$ satisfies \textbf{Condition}, a $\Upsilon$-invariant\footnote{\ I.e.\! a function which factors through $\cn$.} pluri-subharmonic function $f\colon \mathscr{X}\to \R$
is bounded if and only if it is constant. 

\medskip

A special geometry is characterized by the presence of a holomorphic \emph{symmetric} 3-form
$Y\in \Gamma(\mathscr{X},\odot^3\mspace{1mu} T^*\mspace{-3mu}\mathscr{X})$
\be
Y\equiv Y_{ijk}\, dz^i\otimes dz^j\otimes dz^k\qquad \overline{\partial}Y=0,
\ee
defined by the Pauli couplings in the low-energy effective Lagrangian
\be
Y_{ijk}\, \overline{\lambda}^i_L\sigma^{\mu\nu}\lambda^j_L \,F_{\mu\nu}^{+\,k}+\text{h.c.},
\ee
where the $\lambda_L^i$ are left-handed gauginos and the $F_{\mu\nu}^{+\,k}$
are the (complex) self-dual components of the gauge field-strengths. See footnote \ref{footN} for the interpretation of $Y$ in terms of holomorphic integrable systems.\footnote{\ \label{footN}In the language of holomorphic integrable systems \cite{R7,R13} i.e.\! of holomorphic fibrations $\mathscr{A}\to\mathscr{X}$ with Lagrangian fibers which are polarized Abelian varieties, $Y$ is the Kodaira-Spencer map 
$\phi_\text{KS}\colon T\mspace{-2mu}\mathscr{X}\otimes \Lambda\to \Lambda^\vee$ of the underlying family of Abelian varieties. (Here $\Lambda\to \mathscr{X}$ is the holomorphic vector bundle whose fibers are the $H^{1,0}$ groups of the Abelian fibers). Since the symplectic structure yields an isomorphism $\Lambda\simeq T\mspace{-2mu}\mathscr{X}$, the Kodaira-Spencer map is a 3-form on $T\mspace{-2mu}\mathscr{X}$ which is easily seen to be symmetric.}  
In local special coordinates $a^i$,
with holomorphic pre-potential $\cf(a)$,
\be
Y= -i\,\frac{\partial^3 \cf(a)}{\partial a^i\partial a^j\partial a^k}\, da^i\otimes da^j\otimes da^k.
\ee
A simple local computation yields
\be\label{xrtttt}
R_{i\bar j k\bar l}= G^{\bar n m}\, Y_{ik m} \bar Y_{\bar j\bar l\bar n}.
\ee
where $G^{\bar i j}$ is the inverse of the special K\"ahler metric $G_{i\bar j}$.
\begin{corl}\label{coorr} {\bf(1)} The Riemann tensor of a special K\"ahler geometry is Nakano  non-negative, hence its holomorphic bisectional and Ricci curvatures are non-negative. {\bf(2)} The scalar curvature
is $R=\|Y\|^2$ and hence if $R$ is zero at the point $p$, the tensor $Y$ vanishes at $p$, and therefore the full
Riemann tensor $R_{i\bar j k\bar l}$ is zero at that point. {\bf(3)} The scalar curvature $R$
is a pluri-subharmonic function. 
\end{corl}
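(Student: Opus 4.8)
\medskip
\noindent The plan is to read off all three statements directly from the local factorization \eqref{xrtttt}, $R_{i\bar jk\bar l}=G^{\bar nm}\,Y_{ikm}\bar Y_{\bar j\bar l\bar n}$, together with the holomorphy $\bar\partial Y=0$; the arithmetic hypothesis on $\Upsilon$ plays no role here. For \textbf{(1)} I would insert \eqref{xrtttt} into the Nakano quadratic form and set $v_m:=Y_{ikm}\,u^{ik}$, so that
\[
R_{i\bar jk\bar l}\,u^{ik}\bar u^{\bar j\bar l}=G^{\bar nm}\,v_m\,\bar v_{\bar n}\ \ge\ 0,
\]
since $G^{\bar nm}$ is the inverse of a positive-definite Hermitian matrix. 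The remaining assertions of part (1) -- Griffiths non-negativity (restrict to $u^{ik}=\xi^i\eta^k$), non-negativity of the holomorphic bisectional and sectional curvatures, and of the Ricci tensor (take traces) -- are the standard chain of implications for K\"ahler metrics, and I would simply quote them from \cite{R16,R21}.

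For \textbf{(2)} I would trace \eqref{xrtttt} twice against the inverse metric:
\[
R=G^{i\bar j}G^{k\bar l}R_{i\bar jk\bar l}=G^{i\bar j}G^{k\bar l}G^{\bar nm}\,Y_{ikm}\bar Y_{\bar j\bar l\bar n}=\|Y\|_G^2,
\]
the squared pointwise norm of $Y$ in the Hermitian metric induced by $G$ on $\odot^3T^*\mathscr X$. This norm is non-negative and positive-definite on each fibre, so $R(p)=0$ forces $Y(p)=0$, whence \eqref{xrtttt} yields $R_{i\bar jk\bar l}(p)=0$ at once.

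The real content is \textbf{(3)}, which I would prove by a Bochner--Kodaira argument for $Y$, viewed as a \emph{holomorphic} section of the Hermitian holomorphic bundle $E:=\odot^3T^*\mathscr X$. Working in K\"ahler normal coordinates at a point and using $\nabla G=0$ together with $\bar\partial Y=0$, one obtains
\[
\partial_m\partial_{\bar n}R=\partial_m\partial_{\bar n}\|Y\|^2=\big\langle\nabla_mY,\nabla_nY\big\rangle-\big\langle\Theta_{m\bar n}Y,\,Y\big\rangle,
\]
where $\nabla,\Theta$ are the Chern connection and curvature of $E$. As a Hermitian form in $(m,\bar n)$ the first term is $\|\nabla_\xi Y\|^2\ge 0$. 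For the second, the curvature of $E$ acts as a derivation -- one factor of the Riemann tensor per symmetric slot -- so substituting \eqref{xrtttt} turns each slot's contribution to $-\langle\Theta_{m\bar n}Y,Y\rangle\,\xi^m\bar\xi^{\bar n}$ into a sum $\sum_r\,(W^{(r)})^{\dagger}M\,W^{(r)}$, with $W^{(r)}_a:=Y_{amr}\,\xi^m$ and $M_{ad}:=\sum_{bc}Y_{abc}\,\bar Y_{\bar d\bar b\bar c}$ a positive semi-definite Gram matrix; hence the second term is $\ge 0$ as well and $\partial_m\partial_{\bar n}R\ge 0$.

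Equivalently and more structurally: by part (1) $T\mathscr X$ is Griffiths non-negative, hence $T^*\mathscr X$ is Griffiths non-positive; since Griffiths non-positivity is inherited by tensor products (the $(\xi,\bar\xi)$-curvature of a tensor product is the Kronecker sum of the factors') and by holomorphic sub-bundles, $E=\odot^3T^*\mathscr X\subset(T^*\mathscr X)^{\otimes3}$ is Griffiths non-positive, and the classical Bochner inequality for $\|Y\|^2$ with $Y$ holomorphic gives the claim immediately. In either route the single genuinely delicate point is getting the sign bookkeeping of the Bochner identity right and recognizing that the curvature term reassembles into a manifestly non-negative (Gram) form; once that is in place, everything is a direct substitution of \eqref{xrtttt}.
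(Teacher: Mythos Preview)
Your proposal is correct and follows essentially the same approach as the paper. The paper treats \textbf{(1)} and \textbf{(2)} as immediate from \eqref{xrtttt} and, for \textbf{(3)}, writes the one-line Bochner identity
\[
\partial_{\bar j}\partial_i R=\langle[\nabla_{\bar j},\nabla_i]Y,Y\rangle+\langle\nabla_iY,\nabla_jY\rangle\ge 0,
\]
invoking Griffiths non-negativity of $R_{i\bar jk\bar l}$ for the curvature term---exactly your argument, with your ``structural'' route (Griffiths non-positivity of $\odot^3T^*\mathscr X$) being the dual restatement of the same fact. Your explicit Gram-matrix rewriting is a bit index-sloppy (e.g.\ $M$ should carry a barred index and the $r$-sum needs a metric contraction), but the idea is sound and merely spells out what the paper leaves implicit.
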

Only {\bf(3)} requires a proof. One has\footnote{\ Here 
$\langle Y, Y\rangle\equiv Y_{ijk}G^{i\bar l}G^{j\bar m} G^{k\bar n}\bar Y_{\bar l \bar m \bar n}\equiv\|Y\|^2$
is the Hermitian norm of $Y$; our convention is that the Hermitian forms are anti-linear in their \emph{second} argument.}
\begin{equation}
\begin{split}
\partial_{\bar j}\partial_i R&= \langle \nabla_{\bar j}\nabla_i Y,Y\rangle+
\langle \nabla_i Y,\nabla_j Y\rangle
=\langle [\nabla_{\bar j},\nabla_i] Y,Y\rangle+
\langle \nabla_i Y,\nabla_j Y\rangle\geq 0
\end{split}\label{poiuye}
\end{equation}
since both terms in the \textsc{rhs} are non-negative (the first one because $R_{i\bar j k\bar l}$ is Griffiths non-negative).
Therefore, under \textbf{Condition}, the scalar curvature $R$ is either constant or unbounded above.

\begin{lem} Assume {\bf Condition}. If the scalar curvature $R$ is constant, the special K\"ahler metric is locally flat.
\end{lem}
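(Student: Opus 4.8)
The plan is to read the Bochner-type identity (\ref{poiuye}) backwards. If $R$ is constant then $\partial_{\bar j}\partial_i R\equiv 0$, so the right-hand side of (\ref{poiuye}) vanishes identically as a Hermitian form in $(i,\bar j)$. By Corollary \ref{coorr} that right-hand side is a sum of \emph{two} non-negative Hermitian forms: the curvature term $\langle[\nabla_{\bar j},\nabla_i]Y,Y\rangle$ (non-negative because $R_{i\bar jk\bar l}$ is Griffiths non-negative) and the Gram form $\langle\nabla_iY,\nabla_jY\rangle$ of the tensors $\nabla_iY$. Hence both must vanish. The diagonal of the Gram form gives $\|\nabla_iY\|^2=0$ for every $i$, and since $Y$ is holomorphic we also have $\nabla_{\bar j}Y=0$; therefore $Y$ is parallel, $\nabla Y=0$, and by (\ref{xrtttt}) the whole Riemann tensor is parallel (the metric is locally symmetric).

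The heart of the matter is to upgrade $\nabla Y=0$ to Ricci-flatness. Applying the Ricci identity to the parallel tensor $Y$ gives
\[
R_{i\bar j k}{}^{p}\,Y_{plm}+R_{i\bar j l}{}^{p}\,Y_{kpm}+R_{i\bar j m}{}^{p}\,Y_{klp}=0 .
\]
Trace this with the inverse metric $G^{\bar j i}$: using the K\"ahler symmetries of $R_{i\bar jk\bar l}$, the contraction $G^{\bar j i}R_{i\bar jk}{}^{p}$ is the Ricci endomorphism $\rho_k{}^{p}$, so one obtains $\rho_k{}^{p}Y_{plm}+\rho_l{}^{p}Y_{kpm}+\rho_m{}^{p}Y_{klp}=0$. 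Now contract with $\bar Y^{klm}$; by the total symmetry of $Y$ the three resulting terms coincide, and contracting (\ref{xrtttt}) once shows $\rho_{i\bar j}=Y_{ikl}\bar Y_{\bar j}{}^{kl}$, equivalently $\sum_{kl}Y_{ikl}\bar Y^{jkl}=\rho_i{}^{j}$. The identity therefore collapses to $\rho_k{}^{p}\rho_p{}^{k}=\mathrm{tr}(\rho^{2})=0$, and since $\rho_{i\bar j}$ is a non-negative Hermitian form (Corollary \ref{coorr}) this forces $\rho\equiv 0$.

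Finally, by Corollary \ref{coorr}(2) the scalar curvature is $R=\mathrm{tr}\,\rho=\|Y\|^{2}$; from $\rho\equiv 0$ we get $\|Y\|^{2}\equiv 0$, hence $Y\equiv 0$, and then (\ref{xrtttt}) yields $R_{i\bar jk\bar l}\equiv 0$, i.e.\ the special K\"ahler metric is locally flat. The one place requiring care is the middle paragraph: the Ricci-identity computation followed by the two contractions, where the index positions and the overall signs must be kept consistent with the conventions of Section \ref{curvv}; everything else is formal. It is worth noting that {\bf Condition} is not really used in this argument — it enters only through the dichotomy established just before the Lemma (``$R$ is constant or unbounded above''), which is what isolates the constant case treated here.
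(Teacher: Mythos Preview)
Your proof is correct, and its decisive step---tracing the Ricci identity for $Y$ against $\bar Y$ to obtain $\mathrm{tr}(\rho^{2})=0$---is exactly the paper's computation $G^{i\bar j}\langle[\nabla_{\bar j},\nabla_i]Y,Y\rangle=3\,R^{m\bar n}R_{m\bar n}=0$. The detour through $\nabla Y=0$ (from the second term of \eqref{poiuye}) is unnecessary: the paper extracts the same commutator constraint directly from the vanishing of the \emph{first} term, so your middle paragraph simply re-derives what you already have in hand.
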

\begin{proof}
If $R$ is constant, both terms in the \textsc{rhs} of eq.\eqref{poiuye} vanish.
In particular one has
\begin{equation}
0= G^{i\bar j}\langle [\nabla_{\bar j},\nabla_i]Y, Y\rangle= 3\, R^{m\bar n} \,Y_{mkl} \bar{Y}_{\bar n \bar k \bar l}\,G^{k\bar k} G^{l\bar l}= 3\, R^{m\bar n}R_{m\bar n}
\end{equation}
which implies that the manifold is Ricci flat. But a Ricci flat special K\"ahler manifold is flat by \textbf{Corollary \ref{coorr}(2)}.
\end{proof}

\begin{corl}[\bf Dycothomy] In a special K\"ahler geometry $\mathscr{X}$ satisfying {\bf Condition:}
\begin{itemize}
\item either $R_{i\bar j k\bar l}\equiv0$ everywhere, i.e.\! $\mathscr{X}$ is flat,
\item or $\sup_{\mathscr{X}} R=+\infty$.
\end{itemize}
In particular a homogeneous special K\"ahler manifold is locally flat.
\end{corl}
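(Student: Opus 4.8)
The statement is a direct corollary of the preceding Lemma together with the pluri-subharmonicity of $R$, so the plan is short. First I would record that the scalar curvature $R$ is an intrinsic Riemannian invariant of the special K\"ahler metric, hence invariant under the isometric $\Upsilon$-action and thus descends to a smooth function on $\cn=\mathscr{X}/\Upsilon$; moreover by \textbf{Corollary \ref{coorr}(2)} we have $R=\|Y\|^2\geq 0$, so it is automatically bounded below. By \textbf{Corollary \ref{coorr}(3)} $R$ is pluri-subharmonic, so the boundedness alternative recalled in \S\ref{curvv} — a $\Upsilon$-invariant pluri-subharmonic function on $\mathscr{X}$ satisfying \textbf{Condition} is bounded if and only if it is constant — applies to it. Combined with $R\geq 0$ this leaves exactly two possibilities: either $R$ is constant, or $\sup_{\mathscr{X}}R=+\infty$.

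Next I would dispose of the constant case by invoking the Lemma: if $R$ is constant then the special K\"ahler metric is locally flat, i.e.\ $R_{i\bar j k\bar l}\equiv 0$ on all of $\mathscr{X}$ (equivalently, by \textbf{Corollary \ref{coorr}(2)}, $Y\equiv 0$). This is the first bullet; the remaining alternative $\sup_{\mathscr{X}}R=+\infty$ is the second. That is the whole dichotomy.

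Finally, for the homogeneous case: if the isometry group of $\mathscr{X}$ (equivalently, of $\cm$) acts transitively, then the isometry-invariant function $R$ is a finite constant, so the second branch is excluded and we land in the first, giving local flatness. Equivalently — and without even invoking \textbf{Condition} — constancy of $R$ forces $\partial_i\partial_{\bar j}R=0$ pointwise, hence both non-negative terms in eq.\eqref{poiuye} vanish; contracting the first with $G^{i\bar j}$ exactly as in the proof of the Lemma gives $R^{m\bar n}R_{m\bar n}=0$, so the manifold is Ricci-flat, and \textbf{Corollary \ref{coorr}(2)} then yields $R_{i\bar j k\bar l}\equiv 0$. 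The only point worth flagging is the step upgrading ``$R$ non-constant $\Rightarrow$ $R$ unbounded'' to ``$R$ unbounded \emph{above}''; this is immediate from $R=\|Y\|^2\geq 0$, so there is no genuine obstacle here — the content of the corollary is entirely carried by the Lemma and by the pluri-subharmonicity established in \textbf{Corollary \ref{coorr}}.
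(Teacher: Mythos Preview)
Your proof is correct and follows exactly the route implicit in the paper: the dichotomy is the immediate consequence of combining \textbf{Corollary \ref{coorr}(3)} (pluri-subharmonicity of $R$) with the boundedness alternative stated just before and the preceding \textbf{Lemma}; the homogeneous case then drops out from constancy of $R$. Your remark that $R=\|Y\|^2\geq 0$ is what upgrades ``unbounded'' to ``unbounded above'' is a helpful clarification the paper leaves tacit.
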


Moreover one can show \cite{R7,R7.5} that the points where $R$ diverges should be at finite distance from regular points of $\mathscr{X}$. Thus a special K\"ahler manifold is
non-complete unless it is (locally) flat. An alternative proof  follows from the results in \S.\,\ref{stableG}.

\subsection{Relation with the period maps}

The gauge couplings define holomorphic period maps
\begin{align}\label{target}
&p\colon\cm \to \Gamma\backslash Sp(2r,\R)/U(r)\\
& \check{p}\colon \cn\to U\backslash Sp(2r,\R)/U(r)\label{target2}
\end{align}
which may be lifted to a holomorphic map to Siegel's upper half-space
\be
\tau\colon \mathscr{X}\to Sp(2r,\R)/U(r)\equiv \mathscr{H}
\ee
given locally in terms of special coordinates $a^i$ as
\be
a^k\mapsto \tau_{ij}\equiv\frac{\partial^2\cf(a)}{\partial a^i\partial a^j}
\ee
where we see $\mathscr{H}\equiv Sp(2r,\R)/U(r)$
 as the space of complex symmetric matrices $\boldsymbol{\tau}=(\tau_{ij})$ with positive-definite imaginary part. 
 $\mathscr{H}$ is a K\"ahler symmetric space
 and a Hadamard manifold, i.e.\!
a complete simply-connected Riemannian manifold with non-positive sectional curvatures.
In particular $\mathscr{H}$ is Einstein with a negative Ricci tensor.
We normalize the symmetric metric on $\mathscr{H}$ so that its global K\"ahler potential is
\be\label{pioue1}
K_\mathscr{H}(\boldsymbol{\tau},\boldsymbol{\bar\tau})=-\log\det\mathrm{Im}\,\boldsymbol{\tau}.
\ee
The Ricci form $\mathsf{Ric}_\mathscr{H}$ is then related to the symmetric K\"ahler form $\omega_\mathscr{H}$ of $\mathscr{H}$
as
\be\label{whillla}
\mathsf{Ric}_\mathscr{H}=-(r+1)\,\omega_\mathscr{H}.
\ee
In local special coordinates the K\"ahler metric takes the form
\be\label{pioue2}
ds^2= \mathrm{Im}(\tau_{ij})\,da^i \,d\bar a^{\bar j}.
\ee 
Since $\tau$, $p$ and $\check{p}$ are holomorphic maps between K\"ahler spaces, they are in particular harmonic maps from a manifold with non-negative Ricci curvature to a manifold with non-positive sectional curvatures. This entails that their energy density is sub-harmonic \cite{R16.5}. 
At the global level the period maps $\tau$, $p$ and $\check{p}$ are further restricted by the structure theorem for variations of Hodge structure (VHS), see \cite{R17,R18}.

The formulae for the curvatures in \S.\,\ref{curvv} may be written more functorially as:

\begin{corl} {\bf(1)} Let $\mathsf{Ric}_{\mathscr{X}}$ be the Ricci form of the special K\"ahler metric on $\mathscr{X}$.
One has
\be\label{poiuq}
\mathsf{Ric}_{\mathscr{X}}=\tau^*\omega_\mathscr{H}.
\ee
{\bf(2)} the scalar curvature $R$ is the energy density $e(\tau)$ of the harmonic map $\tau$.
{\bf(3)} A special geometry is flat if and only if $\tau$ (hence $p$, $\check{p}$) is a constant map. 
\end{corl}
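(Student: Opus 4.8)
The plan is to derive all three statements formally from the pointwise identity \eqref{xrtttt} together with the explicit formulas \eqref{pioue1}–\eqref{pioue2} for the two metrics, so that no computation beyond normalization bookkeeping is required. For \textbf{(1)} I would start from the standard K\"ahler identity $\mathsf{Ric}_{\mathscr{X}}=-i\,\partial\bar\partial\log\det(G_{i\bar j})$. By \eqref{pioue2} the metric in special coordinates is $G_{i\bar j}=\mathrm{Im}\,\tau_{ij}$, so $\det G=\det(\mathrm{Im}\,\boldsymbol{\tau})=\exp(-K_\mathscr{H}\!\circ\tau)$ on the nose, using the definition \eqref{pioue1} of the K\"ahler potential of $\mathscr{H}$. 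Hence $\log\det G=-\,\tau^*K_\mathscr{H}$, and since $\tau$ is holomorphic it commutes with $\partial$ and $\bar\partial$, so $\mathsf{Ric}_{\mathscr{X}}=i\,\partial\bar\partial(\tau^*K_\mathscr{H})=\tau^*\!\big(i\,\partial\bar\partial K_\mathscr{H}\big)=\tau^*\omega_\mathscr{H}$. As an independent check one may instead contract \eqref{xrtttt} to get $\mathsf{Ric}_{i\bar j}=G^{k\bar l}G^{m\bar n}Y_{ikm}\bar Y_{\bar j\bar l\bar n}$ and verify that this is the pullback of the Siegel metric under $a^k\mapsto\tau_{ij}=\partial_i\partial_j\cf$, using $\partial_k\tau_{ij}=i\,Y_{ijk}$ and $\mathrm{Im}\,\boldsymbol{\tau}=G$.

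For \textbf{(2)}, since $\tau$ is a holomorphic map between K\"ahler manifolds its energy density is $e(\tau)=\mathrm{tr}_G\,\tau^*g_\mathscr{H}=G^{i\bar j}(\tau^*g_\mathscr{H})_{i\bar j}$ with the usual normalization $e(\tau)=\tfrac12|d\tau|^2$ (which for a holomorphic map reduces to $|\partial\tau|^2$). By \textbf{(1)} the $(1,1)$-tensor $\tau^*g_\mathscr{H}$ has components $\mathsf{Ric}_{i\bar j}$, so $e(\tau)=G^{i\bar j}\mathsf{Ric}_{i\bar j}$; contracting \eqref{xrtttt} once more, this equals $G^{i\bar j}G^{k\bar l}G^{m\bar n}Y_{ikm}\bar Y_{\bar j\bar l\bar n}=\|Y\|^2$, which by \textbf{Corollary \ref{coorr}(2)} is exactly the scalar curvature $R$. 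Thus $e(\tau)=R$.

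For \textbf{(3)} I would combine the above with the dichotomy already established. If $\tau$ is constant then $\partial^2\cf$ is constant, hence $Y\equiv 0$ by its local expression, hence $R_{i\bar j k\bar l}\equiv 0$ by \eqref{xrtttt} and $\mathscr{X}$ is flat; the maps $p,\check p$ are then constant as well, being induced by $\tau$ through the quotients by $\Gamma$ and $U$. Conversely, if $\mathscr{X}$ is flat then $R\equiv 0$, so by \textbf{(2)} $e(\tau)\equiv 0$, i.e.\ $d\tau\equiv 0$, so $\tau$ is locally constant, hence constant since $\mathscr{X}$ is connected. For the equivalence with constancy of $p$ and $\check p$, recall that $\Gamma\subset Sp(2r,\Z)$ (and likewise $U$) acts properly discontinuously on $\mathscr{H}$, so its orbits are discrete; a holomorphic map from the connected space $\mathscr{X}$ (resp.\ $\cm$, $\cn$) whose image lies in a single orbit is constant, giving ``$\tau$ constant $\iff$ $p$ constant $\iff$ $\check p$ constant''.

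I do not expect a genuine obstacle: once \eqref{xrtttt} and \eqref{pioue1}–\eqref{pioue2} are in hand the Corollary is purely formal. The only mildly delicate point is in \textbf{(2)}, namely keeping the harmonic-map energy-density normalization ($\tfrac12|d\tau|^2$, reducing to $\mathrm{tr}_G\,\tau^*g_\mathscr{H}$ in the holomorphic case) consistent with the K\"ahler scalar-curvature convention $R=\|Y\|^2$ fixed in \textbf{Corollary \ref{coorr}(2)}; if one does not wish to track the factor precisely, the statement still holds up to a fixed positive constant, which is all the later applications use.
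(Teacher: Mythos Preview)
Your proposal is correct and follows precisely the approach the paper intends: the paper's own proof is the single line ``Compare eqs.\,\eqref{pioue1},\,\eqref{pioue2}'', and your argument for \textbf{(1)} via $\mathsf{Ric}_\mathscr{X}=-i\,\partial\bar\partial\log\det G=-i\,\partial\bar\partial(-\tau^*K_\mathscr{H})=\tau^*\omega_\mathscr{H}$ is exactly that comparison spelled out, with \textbf{(2)} and \textbf{(3)} following by contraction and by the already-established identity $R=\|Y\|^2$. Your additional remarks (the cross-check through $\partial_k\tau_{ij}=iY_{ijk}$, and the discreteness-of-orbits argument for the equivalence of constancy of $\tau$, $p$, $\check p$) go a bit beyond what the paper writes but are consistent with it.
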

\begin{proof} Compare eqs. \eqref{pioue1},\,\eqref{pioue2}.\end{proof}
Thus, say,
\be
\int_{\cn} \sqrt{g}\,R\;d\mathsf{vol}= \mathbf{E}(\check{p}),
\ee
where $\mathbf{E}(\cdot)$ is the energy\footnote{\ Suitably normalized.} in the sense of harmonic maps \cite{R19}, or, in the physics language, the classical action
of a $\sigma$-model with source and target spaces as in \eqref{target2}. Thus $\tau$ is the map with
minimal energy (action) between the ones equivariant under the action of $\Upsilon\subset Sp(2r,\Z)$ on $\mathscr{X}$
and $\mathscr{H}$. This statement is essentially equivalent to the structure theorem for $\check{p}$ \cite{R20}.

\begin{exe}\label{exam} Suppose we have a simple closed geodesic $\gamma$ in a non-flat special geometry of dimension 1. (When the geometry is \emph{non} simply-connected they typically exist, but should be unstable by \textbf{Proposition \ref{mainP}}). The interior $\Sigma$ of the geodesics is mapped by $\tau$ into a contractible region $\tau(\Sigma)$ of the half-plane $\mathscr{H}$. Together with eq.\eqref{poiuq}, the Gauss-Bonnet theorem  implies that the area of $\tau(\Sigma)$
is $2\pi$, so that, by \eqref{whillla},
\be
\int_{\tau(\Sigma)} \mathsf{Ric}_\mathscr{H} =-4\pi.
\ee
Applying the Gauss-Bonnet theorem to the image region $\tau(\Sigma)$, we get that the integral of the geodesic curvature $\kappa_g$
along the closed curve $\tau(\gamma)$ is
\be
\int_{\tau(\gamma)}\kappa_g\,ds=6\pi,
\ee 
which means that the image curve in $\mathscr{H}$ is rather far from being a geodesic, quite the opposite. 
This implies that, given a singular point $p\in\cn$, there is a radius $R_0$ so that all disks
$B(p,r)$ with $r<R_0$ do not contain any closed geodesic wrapping around $p$, while any larger ball typically contains infinitely many of them. See appendix for more details.  
\end{exe}

\subsection{Symmetries}

To cover the special case $\dim_\C\mathscr{X}=1$, we define $\mathsf{Conf}^0(\mathscr{X})$ to be the connected Lie group of holomorphic motions generated by vectors $V$ satisfying
\be\label{iuyt}
\mathscr{L}_V\omega_{\mspace{-1mu}\mathscr{X}}=c\cdot\omega_{\mspace{-1mu}\mathscr{X}}
\ee
with $c$ a constant. The connected isometry group $\mathsf{Iso}^0(\mathscr{X})\subset\mathsf{Conf}^0(\mathscr{X})$ is the Lie group generated by vectors $V$ satisfying \eqref{iuyt} with $c=0$.
In Riemannian geometry a vector $V^\alpha\partial_\alpha$ such that $g_{\alpha\beta}=\nabla_\alpha V_\beta$ is called \emph{concurrent}: a Riemannian manifold is isometric to a \emph{cone}
iff it has a concurrent vector \cite{yano}; in a K\"ahlerian cone the vector $IV$ ($I$ being the 
complex structure) is a Killing vector.

\medskip

With this definition, a holomorphic conformal symmetry
leaves the Ricci form $\mathsf{Ric}_\mathscr{X}$ invariant.
In view of \eqref{poiuq} this yields:

\begin{corl} {\bf(1)} $K$ a holomorphic Killing vector of the special K\"ahler geometry. $\tau_\ast K$, if non-zero,
is a Killing vector of the symmetric metric on $\mathscr{H}$ i.e.\! $\tau_\ast K\in \mathfrak{sp}(2r,\R)$.
{\bf (2)} More generally, if $\dim_\C\mspace{-2mu}\mathscr{X}\geq2$ and $V$ is vector field generating a holomorphic conformal motion of $\mathscr{X}$, $\tau_\ast V$ 
 is a Killing vector of $\mathscr{H}$.
\end{corl}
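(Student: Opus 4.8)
The plan is to route everything through eq.~\eqref{poiuq}, $\mathsf{Ric}_{\mathscr{X}}=\tau^{*}\omega_{\mathscr{H}}$, together with the observation recorded just above that a holomorphic conformal motion leaves $\mathsf{Ric}_{\mathscr{X}}$ invariant. Concretely, if $\phi_{t}$ is the flow of $V$ then $\phi_{t}^{*}g_{\mathscr{X}}=e^{ct}g_{\mathscr{X}}$ is a homothety ($c=0$ in the Killing case of (1)), and since the Ricci form $\mathsf{Ric}_{g}=-i\,\partial\bar\partial\log\det g$ is unchanged both under rescaling $g$ by a constant and under biholomorphisms, $\phi_{t}^{*}\mathsf{Ric}_{\mathscr{X}}=\mathsf{Ric}_{\mathscr{X}}$, i.e.\ $\mathscr{L}_{V}\mathsf{Ric}_{\mathscr{X}}=0$. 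Using \eqref{poiuq} this becomes $\mathscr{L}_{V}\bigl(\tau^{*}\omega_{\mathscr{H}}\bigr)=0$, equivalently $(\tau\circ\phi_{t})^{*}\omega_{\mathscr{H}}=\tau^{*}\omega_{\mathscr{H}}$ for all $t$. The goal is to read this as a statement about a vector field on $\mathscr{H}$ itself.

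For that I would invoke the naturality of the period map: a holomorphic motion preserving the special-K\"ahler data up to scale acts on the underlying polarized variation of Hodge structure and hence, in special coordinates, on the pair $\bigl(a^{i},\,a_{D,i}=\partial_{i}\cf\bigr)$ by an element $g_{t}$ of the affine symplectic group — a genuine $g_{t}$, not merely a germ, because $\mathscr{X}$ is simply connected and $\phi_{0}=\mathrm{id}$. Then $\tau_{ij}=\partial_{i}\partial_{j}\cf$ transforms by the associated fractional-linear map $\rho_{t}$ of the Siegel domain, $\tau\circ\phi_{t}=\rho_{t}\circ\tau$ with $\rho_{0}=\mathrm{id}$, so that $\tau_{*}V:=\frac{d}{dt}\big|_{t=0}\rho_{t}$ is a bona fide holomorphic vector field on all of $\mathscr{H}$, not just a section of $T\mathscr{H}$ along the image. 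Because $\phi_{t}$ is a homothety with factor $e^{ct}$, the transformation $\rho_{t}$ is forced to act so as to rescale the period metric $\mathrm{Im}\,\tau$ accordingly, and the only such $\rho_{t}$ are compositions of the symplectic dilation $\tau\mapsto e^{ct}\tau$ (represented by $\mathrm{diag}(e^{ct/2}\mathbf{1},e^{-ct/2}\mathbf{1})\in Sp(2r,\R)$) with ordinary $Sp(2r,\R)$-motions; hence $\rho_{t}\in Sp(2r,\R)$ and $\tau_{*}V\in\mathfrak{sp}(2r,\R)$. In case (1) one has $c=0$, $\rho_{t}$ preserves $\mathrm{Im}\,\tau$ outright, and the argument works in every dimension $r$.

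The last ingredient is the soft fact that on the Hermitian symmetric space $\mathscr{H}$ a holomorphic vector field $W$ with $\mathscr{L}_{W}\omega_{\mathscr{H}}=0$ is automatically Killing: holomorphy means the flow of $W$ commutes with the complex structure, so $\mathscr{L}_{W}\omega_{\mathscr{H}}=0$ forces $\mathscr{L}_{W}g_{\mathscr{H}}=0$, and since $\mathscr{H}$ is complete $W$ integrates to a one-parameter group of isometries, whence $W$ lies in the isometry algebra $\mathfrak{sp}(2r,\R)$. The real obstacle, I expect, is the naturality/rigidity step of the previous paragraph: that a holomorphic conformal motion of $\mathscr{X}$ is genuinely \emph{covered} by a transformation of the period domain lying in $Sp(2r,\R)$. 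Once that is granted, both the existence of $\tau_{*}V$ as a vector field on $\mathscr{H}$ and the identity $\rho_{t}^{*}\omega_{\mathscr{H}}=\omega_{\mathscr{H}}$ are immediate; by contrast, extracting the latter from $\tau^{*}\rho_{t}^{*}\omega_{\mathscr{H}}=\tau^{*}\omega_{\mathscr{H}}$ directly runs into the fact that for $r\ge 2$ the image $\tau(\mathscr{X})$ is in general only a proper subvariety of $\mathscr{H}$. Finally, the hypothesis $\dim_{\C}\mathscr{X}\ge 2$ in part (2) is harmless: in complex dimension one a conformal (cone) structure on the smooth special metric on $\mathscr{X}$ forces flatness, so $\tau$ is constant and $\tau_{*}V=0$ trivially, which is presumably why part (1) alone is stated without a dimension hypothesis.
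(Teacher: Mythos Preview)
Your approach is exactly the paper's: immediately before the Corollary the paper records that a holomorphic conformal symmetry leaves $\mathsf{Ric}_{\mathscr{X}}$ invariant and then simply says ``in view of \eqref{poiuq} this yields'' the result, with no further argument. You have supplied the detail the paper leaves implicit---how to pass from $\mathscr{L}_V(\tau^{*}\omega_{\mathscr{H}})=0$ to $\tau_{*}V\in\mathfrak{sp}(2r,\R)$---via naturality of the period map under the (affine) symplectic action on special coordinates, which is a reasonable way to make the one-line deduction precise.
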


\begin{rem}\label{info}
Informally this \textbf{Corollary} says that, by supersymmetry, a continuous symmetry of the vector scalars' kinetic terms  is also a symmetry (possibly trivial) of the vectors' kinetic terms: it is a compatibility condition between SUSY and bosonic symmetries.
\end{rem}

\medskip

Let $\boldsymbol{\tau}$ be a generic point in the image $\tau(\mathscr{X})$. Abusing language we refer to
the complex submanifold $\tau^{-1}(\boldsymbol{\tau})\subset\mathscr{X}$ as the \emph{$\tau$-fiber} $\mathscr{X}_{\mspace{-1mu}\boldsymbol{\tau}}$ over $\boldsymbol{\tau}$. Same for the map $\check{p}$.

\begin{lem} Assume {\bf Condition.} The $\check{p}$-fibers have finitely many connected components.
\end{lem}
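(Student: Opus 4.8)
The plan is to use the harmonic-map structure together with Condition to rule out infinitely many components. First I would recall that $\check p\colon \cn\to U\backslash \mathscr H/U(r)$ factors through the period map, and that its energy density $e(\check p)$ equals the scalar curvature $R$, which by Corollary \ref{coorr}(3) is pluri-subharmonic on $\mathscr X$ and, being $\Upsilon$-invariant, descends to $\cn$. The $\check p$-fiber over a generic point $\boldsymbol\tau$ is the locus where $\tau$ is ``maximally degenerate'' in the sense that $Y$, hence the full differential of $\tau$, vanishes along the fiber directions — more precisely, the $\tau$-fiber $\mathscr X_{\boldsymbol\tau}$ is an analytic subvariety of $\mathscr X$ cut out by the holomorphic equations $\tau_{ij}(z)=\tau_{ij}$. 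So each fiber is a closed complex-analytic subvariety, and its components are irreducible analytic subvarieties; the statement reduces to showing this analytic set has only finitely many irreducible components.

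Next I would pass to the algebraic picture supplied by Condition: $\cn$ is biholomorphic to a non-compact quasi-projective variety, and the period map $\check p$ is, by the structure theory of VHS (Griffiths, Schmid, and the Borel/Cattani–Deligne–Kaplan extension theorems cited via \cite{R17,R18}), a \emph{quasi-projective} (in particular constructible, algebraic) map once one compactifies appropriately. Hence the $\check p$-fibers are algebraic subvarieties of the quasi-projective $\cn$, and an algebraic variety has only finitely many irreducible components by Noetherianity. That already gives the result at the level of the quotient $\cn$; since we are asking about $\check p$-fibers (living over $\cn$, not $\mathscr X$), this is the correct statement and no lifting issue arises. The role of torsion-freeness of $\Upsilon$ in Condition is to guarantee $\cn$ is a genuine manifold so that ``connected components of a fiber'' is unambiguous.

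The main obstacle I anticipate is justifying the algebraicity of $\check p$ — i.e.\ that a fiber of the period map is genuinely an algebraic (not merely analytic) subset of the quasi-projective $\cn$. This is exactly where one must invoke the o-minimality/definability of period maps (the Bakker–Klingler–Tsimerman theorem, or classically the Cattani–Deligne–Kaplan algebraicity of Hodge loci for the special case of fibers over a fixed period point), rather than anything elementary. An alternative, more self-contained route that avoids this heavy machinery: show directly that $e(\check p)=R$ is bounded on each fiber (it is continuous and the fiber, being the zero locus of the pluri-subharmonic-obstruction $\|Y\|^2$ restricted suitably, is contained in $\{R=\text{const}\}$ only if $R$ is constant, which by the Dichotomy forces flatness — the nontrivial case). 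In the flat case $\tau$ is constant and the single fiber is all of $\mathscr X$, trivially connected. In the non-flat case one combines the subharmonicity of $e(\check p)$ with the fact that the fiber directions are precisely the kernel of $d\check p$, so the fiber is totally geodesic and flat, and then a compactness/finiteness argument at the boundary of $\cn$ (using unipotency of the local monodromies at infinity from Condition) bounds the number of ends, hence the number of components. I expect the cleanest writeup to be the algebraic one, citing VHS finiteness, with the harmonic-map remarks as motivation.
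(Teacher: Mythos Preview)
Your main approach is essentially the paper's: compactify $\cn$ to a smooth projective $\overline{\cn}$, extend the period map across the boundary, and invoke algebraic finiteness. But you overestimate the difficulty of the extension step. The target $U\backslash Sp(2r,\R)/U(r)$ is an arithmetic quotient of a \emph{Hermitian symmetric domain} (Siegel upper half-space), so it has the classical Baily--Borel projective compactification, and Borel's extension theorem (which needs only the unipotent local monodromies supplied by \textbf{Condition}) extends $\check p$ to a morphism of projective varieties $\overline{\cn}\to\overline{U\backslash\mathscr H}^{\mathrm{BB}}$. No o-minimality, Bakker--Klingler--Tsimerman, or even Cattani--Deligne--Kaplan is required here: those tools address period maps to \emph{non-classical} period domains, whereas a weight-1 polarized VHS lands in the classical Siegel case. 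Once $\check p$ is a projective morphism, the paper invokes Stein factorization; your Noetherianity remark would do just as well.

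Your alternative harmonic-map route does not work as sketched. Fibers are not in general totally geodesic: the paper's lemma on $\tau$-fibers says intrinsic flatness is \emph{equivalent} to being totally geodesic, not automatic. And pluri-subharmonicity of $R$ together with ``boundedness on a fiber'' gives no handle on the number of components --- $R$ need not be constant on a fiber, and even if it were, nothing prevents infinitely many disjoint analytic pieces in a merely analytic (non-algebraic) setting. The algebraic argument is the one that actually closes.
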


\begin{proof}
Under \textbf{Condition} we can find a pair $(\overline{\cn},D)$ with $\overline{\cn}$
a smooth projective variety and $D\subset\overline{\cn}$ a normal crossing divisor, such that $\cn=\overline{\cn}\setminus D$.
By Borel's extension theorem (\textbf{Theorem 3.10} of \cite{Bor}), $\check{p}$ extends to a
morphism of projective varieties from $\overline{\cn}$ to the  Baily-Borel compactification \cite{BaiB}
of the arithmetic quotient $U\backslash Sp(2r,\R)/U(r)$. The statement then follows from Stein factorization (cf.\! \textbf{Corollary 11.5} of \cite{Hart}).
\end{proof}

\begin{pro}\label{yttttx} Suppose the special K\"ahler manifold $\mathscr{X}$ satisfies {\bf Condition}.
Let $\mathsf{Iso}^0(\mathscr{X})$  be the connected component of its holomorphic isometry group.
Each $\mathsf{Iso}^0(\mathscr{X})$ orbit  is fully contained in a $\tau$-fiber $\mathscr{X}_{\boldsymbol{\tau}}=\tau^{-1}(\boldsymbol{\tau})$. In other words: the period matrix $\tau_{ij}$ is constant along each orbit of the isometry group.  When $\dim_\C\mspace{-2mu}\mathscr{X}\geq2$ the same statement holds for the orbits of 
the connected component $\mathsf{Con}^0(\mathscr{X})$ of the (holomorphic) conformal symmetry group. 
\end{pro}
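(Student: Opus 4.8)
\textit{Proof proposal.} The plan is to reduce the statement to the fact that the image of the symmetry group inside the isometry group of $\mathscr{H}$ fixes $\tau(\mathscr{X})$ pointwise, and to obtain this from the rigidity of the arithmetic monodromy. If $\mathscr{X}$ is flat then $\tau$ is constant by the \textbf{Corollary} identifying $\mathsf{Ric}_{\mathscr{X}}$ with $\tau^\ast\omega_{\mathscr{H}}$ (part (3)), and the assertion is trivial; so assume $\mathscr{X}$ is not flat. Let $G$ denote $\mathsf{Iso}^0(\mathscr{X})$, or $\mathsf{Con}^0(\mathscr{X})$ when $\dim_\C\mathscr{X}\ge2$. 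By the \textbf{Corollary} on $\tau_\ast K$ --- part (1) in the isometric case, part (2) in the conformal case --- the differential $d\tau$ sends every generator of $G$ to a Killing field of $\mathscr{H}$; since $\tau$-relatedness is preserved by Lie brackets and by flows, this integrates to a homomorphism $\rho\colon G\to Sp(2r,\R)$ with $\tau\circ g=\rho(g)\circ\tau$ for all $g\in G$. Hence $\tau$ maps a $G$-orbit $\mathscr{O}=G\cdot x$ onto the $\rho(G)$-orbit $\rho(G)\cdot\tau(x)$, and "$\mathscr{O}$ lies in a single $\tau$-fiber" is equivalent to "$\rho(G)$ fixes $\tau(x)$", i.e.\ to $\rho(G)$ acting trivially on $\tau(\mathscr{X})$.

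To prove the latter I would show that $\rho(G)$ commutes with the monodromy. The monodromy representation $m\colon\Upsilon\to Sp(2r,\Z)$ and $\rho$ are both instances of the homomorphism $\mathsf{Iso}(\mathscr{X})\to Sp(2r,\R)$ recording the action of isometries on the underlying variation of Hodge structure (equivalently, on the lattice of electro-magnetic charges); using that $\mathsf{Iso}^0(\mathscr{X})$ normalizes $\Upsilon$ inside $\mathsf{Iso}(\mathscr{X})$ --- so that the two actions on $\mathscr{X}$ are compatible and descend to $\cn=\mathscr{X}/\Upsilon$ --- one obtains that $\rho(G)$ and $m(\Upsilon)$ commute in $Sp(2r,\R)$. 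After replacing $(\mathscr{H},Sp(2r,\R))$ by the relevant Mumford--Tate subdomain and its automorphism group if necessary, one may assume $m(\Upsilon)$ is Zariski-dense in $Sp(2r,\R)$ (Zariski-density of monodromy in the derived Mumford--Tate group). Then the connected subgroup $\rho(G)$ lies in the centralizer of a Zariski-dense subgroup, hence in the finite center of $Sp(2r,\R)$, hence $\rho(G)=\{1\}$. Thus $\tau(gx)=\tau(x)$ for every $g\in G$: each orbit is contained in a $\tau$-fiber, and in particular the period matrix $\tau_{ij}$ is constant along it. The conformal case for $\dim_\C\mathscr{X}\ge2$ is covered by the identical argument, because by the \textbf{Corollary} a holomorphic conformal motion of $\mathscr{X}$ is still pushed forward by $\tau$ to an \emph{isometry} of $\mathscr{H}$, so $\rho$ again lands in $Sp(2r,\R)$.

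The step I expect to be the main obstacle is the commutation of $\rho(G)$ with the monodromy --- concretely, reconciling the continuous isometry/conformal group of the cover $\mathscr{X}$ with the discrete deck group $\Upsilon$. Physically a continuous global symmetry commutes with the monodromy around the singularities, but geometrically one must know that special-K\"ahler isometries preserve the integral variation of Hodge structure up to $Sp(2r,\Z)$ --- the integrated form of the \textbf{Corollary} --- and that $\mathsf{Iso}^0(\mathscr{X})$ genuinely normalizes $\Upsilon$; here \textbf{Condition} and the structure theory of arithmetic monodromy groups enter, in the spirit of the preceding \textbf{Lemma}. If one prefers to avoid the density-theorem input, there is a potential-theoretic alternative using tools already set up: for a holomorphic Killing field $V$ of $\mathscr{X}$ the push-forward $\tau_\ast V$ is a holomorphic Killing field of the Hadamard manifold $\mathscr{H}$, whose holomorphic bisectional curvature is non-positive, so by the Bochner--Kodaira inequality applied to the holomorphic section $(\tau_\ast V)^{1,0}$ of $T^{1,0}\mathscr{H}$ the function $\log\|\tau_\ast V\|^2_{\mathscr{H}}$, hence $\|\tau_\ast V\|^2_{\mathscr{H}}$ itself, is plurisubharmonic on $\mathscr{H}$; pulling back by the holomorphic map $\tau$, the function $\Phi:=\|\tau_\ast V\|^2_{\mathscr{H}}\circ\tau$ is plurisubharmonic on $\mathscr{X}$. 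Once one arranges --- symmetrizing over $\mathrm{Lie}\,G$ and invoking Borel's extension theorem to bound the behaviour at the divisor at infinity --- that $\Phi$ is $\Upsilon$-invariant and bounded above, the Liouville-type statement recorded after \textbf{Condition} forces $\Phi$ to be constant, whence $\tau_\ast V\equiv 0$ since $\mathscr{H}$ has no Euclidean de Rham factor and hence no nonzero Killing field of constant norm on the image. The delicate point in this variant is exactly the simultaneous $\Upsilon$-invariance and boundedness of $\Phi$.
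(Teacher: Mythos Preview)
Your strategy --- push the symmetry algebra to $\mathfrak{sp}(2r,\R)$ via $\tau_\ast$ and constrain it using the arithmetic monodromy --- is exactly the paper's. The gap is precisely where you flag it, and the paper resolves it differently from both of your proposed routes.

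\medskip

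\textbf{The normalization goes the other way.} You want $\mathsf{Iso}^0(\mathscr{X})$ to normalize $\Upsilon$, so that (connectedness versus discreteness) it centralizes $\Upsilon$ and $\rho(G)$ commutes with $m(\Upsilon)$. There is no reason for this: $\Upsilon$ is a particular discrete subgroup of $\mathsf{Iso}(\mathscr{X})$, not a normal one, and nothing in \textbf{Condition} makes it so. What \emph{is} automatic is the reverse: since $\Upsilon$ acts by isometries it normalizes the identity component, hence its Lie algebra $\mathfrak{k}$, hence (by $\tau$-equivariance) the subalgebra $\tau_\ast\mathfrak{k}\subset\mathfrak{sp}(2r,\R)$. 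So one only obtains that $m(\Upsilon)$ \emph{normalizes} $\tau_\ast\mathfrak{k}$, and your ``centralizer of a Zariski-dense subgroup lies in the center'' step does not apply.

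\medskip

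\textbf{How the paper closes the argument.} From normalization one passes to the Zariski closure $G(\R)$ of $m(\Upsilon)$ (semisimple; reduce to the simple case), and uses the structure theorem of VHS to get $\tau_\ast\mathfrak{k}\subset\mathfrak{g}(\R)$. Then $\tau_\ast\mathfrak{k}$ is an \emph{ideal} of the simple algebra $\mathfrak{g}(\R)$: either $\tau_\ast\mathfrak{k}=0$ (done) or $\tau_\ast\mathfrak{k}=\mathfrak{g}(\R)$. The paper kills the second branch by curvature rather than algebra: if $\tau_\ast\mathfrak{k}=\mathfrak{g}(\R)$, the isometry group of $\mathscr{X}$ surjects onto the transitive motions of the period domain, so every point of $\mathscr{X}$ can be carried by an isometry into a fixed reference $\tau$-fiber; since curvature is isometry-invariant this bounds the Ricci curvature, contradicting the \textbf{Dichotomy} Corollary ($\sup R=+\infty$ for non-flat special K\"ahler geometries satisfying \textbf{Condition}). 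That curvature input, already set up in \S\,2.1, is exactly what replaces the commutation you could not establish.

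\medskip

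Your potential-theoretic alternative hits the same wall from another side: $\Upsilon$-invariance of $\Phi=\|\tau_\ast V\|^2_{\mathscr{H}}\circ\tau$ unwinds to $\mathrm{Ad}(m(\gamma))(\tau_\ast V)=\tau_\ast V$ for all $\gamma\in\Upsilon$, i.e.\ again to the unproved commutation of $\tau_\ast V$ with the monodromy.
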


\begin{rem} \textbf{Condition} is necessary. Otherwise it is easy to construct explicit counterexamples: e.g.\! the special geometry \eqref{ghdf} in the appendix.
\end{rem}

\begin{proof} Let $\mathfrak{k}$ be the Lie algebra of conformal motions. $\tau_\ast(\mathfrak{k})\subset\mathfrak{sp}(2r,\R)$ is a subalgebra. Since $\Upsilon\subset Sp(2r,\Z)$ acts by isometries,
it normalizes $\mathfrak{k}$ hence $\tau_\ast(\mathfrak{k})$ since $\tau$ is $\Upsilon$-equivariant.
$\tau_\ast(\mathfrak{k})$ is then normalized by the Zariski closure $G(\R)$ of $\Upsilon$ in $Sp(2r,\R)$.
$G(\R)$ is semisimple and we may assume it simple with no loss.
$\tau_\ast(\mathfrak{k})\subset\mathfrak{g}(\R)$ (by the structure theorem of VHS), and then
either  $\tau_\ast(\mathfrak{k})=0$ or $\tau_\ast(\mathfrak{k})=\mathfrak{g}(\R)$. In the first case
$\boldsymbol{\tau}$ is constant along the $\mathsf{Iso}^0(\mathscr{X})$-orbits and we are done.
In the second case, given a point $q\in\mathscr{X}$ there is an isometry which maps it in some point of a chosen reference fiber. Therefore the Ricci curvature is bounded everywhere, so the full space should be  flat: this is a contradiction since $\tau$ is not the constant map. Only the first case may happen.
\end{proof}

\begin{pro}\label{uuuaae} Assume {\bf Condition}. The orbits of $\mathsf{Conf}^0(\mathscr{X})$
are complex analytic submanifolds of $\mathscr{X}$. Moreover the complex Lie group $\mathsf{Conf}^0(\mathscr{X})$ is Abelian.\end{pro}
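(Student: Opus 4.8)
The plan is to reduce to the isometry group, use \textbf{Proposition~\ref{yttttx}} to confine every orbit to a $\tau$-fiber, and then exploit the rigidity of the special affine structure along those fibers. We may assume $\mathscr{X}$ is not flat. First, the conformal factor $c$ in \eqref{iuyt} is a Lie--algebra homomorphism from $\mathfrak{k}:=\mathrm{Lie}\,\mathsf{Conf}^0(\mathscr{X})$ to $\R$: since $\mathscr{L}_{[V,W]}=[\mathscr{L}_V,\mathscr{L}_W]$ and $c(V),c(W)$ are \emph{constants}, $\mathscr{L}_{[V,W]}\omega_{\mathscr{X}}=0$, i.e.\ $c([V,W])=0$; hence $\mathfrak{iso}:=\mathrm{Lie}\,\mathsf{Iso}^0(\mathscr{X})=\ker c$ is an ideal with $\dim_{\R}(\mathfrak{k}/\mathfrak{iso})\le1$, so it suffices to treat $\mathsf{Iso}^0(\mathscr{X})$ together with an at most one--parameter homothetic extension. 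Next, for $\dim_{\C}\mathscr{X}\ge2$ \textbf{Proposition~\ref{yttttx}} gives $\tau_*V=0$ for all $V\in\mathfrak{k}$; as $\tau_{ij}=\partial_i\partial_j\cf$ this means $V^kY_{ijk}\equiv0$, so $V$ is a holomorphic section of the analytic distribution $\ker Y\subset T\mathscr{X}$ and every orbit lies in a $\tau$-fiber. (When $\dim_{\C}\mathscr{X}=1$ the generic $\tau$-fiber is a point, so $\mathsf{Iso}^0(\mathscr{X})$ is trivial, and one is left only to rule out a non--isometric homothety, which by the concurrent--vector characterization recalled after \eqref{iuyt} would force $\mathscr{X}$ to be a metric cone; but a one--dimensional K\"ahler cone is flat.)

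For the first assertion, by \eqref{xrtttt} the Riemann tensor annihilates $\ker Y$, so the ambient holomorphic bisectional curvature vanishes along every $\tau$-fiber. The crux is a lemma that a holomorphic conformal motion preserves the flat \emph{special connection} $\nabla$ --- equivalently, that the $\tau$-fibers are totally geodesic: an isometry preserves $G$ and, via \eqref{xrtttt}, both $\ker Y$ and the Hermitian data of the holomorphic tensor $Y$, which pin down $\nabla$ up to the $U(1)$ ambiguity already absorbed in the choice of the special coordinates $a^i$. Granting this, $\mathsf{Conf}^0(\mathscr{X})$ acts on the $a^i$ by affine transformations, a connected component of a $\tau$-fiber (finitely many, by the finiteness \textbf{Lemma} for $\check{p}$-fibers) is the affine complex subspace $\{a:\partial_i\partial_j\cf(a)=(\boldsymbol{\tau})_{ij}\}$ carrying the flat induced metric $\mathrm{Im}(\boldsymbol{\tau})_{ij}\,da^i\,d\bar a^{\bar j}$, and each orbit, sitting inside one such component, is an affine piece --- in particular a complex analytic submanifold.

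For abelianness, $\tau_*V=0$ forces the linear part of the affine action of $V$ to fix the period matrix throughout $\tau(\mathscr{X})\subset\mathscr{H}$, which is Zariski--dense in its $G(\R)$-orbit by the VHS structure theorem invoked in the proof of \textbf{Proposition~\ref{yttttx}}; hence that linear part is a scalar, a phase when $V$ is an isometry. Thus $\mathsf{Iso}^0(\mathscr{X})$ is generated by a phase $U(1)$ (present only when $\mathscr{X}$ is a cone) together with translations $a\mapsto a+tb$, $b$ in the \emph{complex} subspace $\bigcap_p\ker Y_p\subset\C^r$, which commute with one another; and the homothetic generator, when present, is modulo $\mathfrak{iso}$ the radial field of the ensuing metric cone and commutes with the cone's radially--invariant isometries. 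Hence $\mathsf{Conf}^0(\mathscr{X})$ is abelian; it is moreover stable under $V\mapsto IV$ (phases and dilations assemble into a $\C^*$, and the translation directions span a complex subspace), so it is a complex Lie group, whose orbits are complex affine subspaces (resp.\ punctured complex lines through the apex of the cone).

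The main obstacle is the rigidity lemma of the second paragraph: proving that a holomorphic conformal motion preserves the special flat connection $\nabla$, equivalently that the $\tau$-fibers are totally geodesic. This is the one place where one must upgrade the ``soft'' data an isometry preserves --- the metric $G$, the complex structure, and $Y$ only up to a constant $U(1)$ phase through \eqref{xrtttt} --- to the rigid affine special structure underlying the special coordinates; the one--dimensional case and the exclusion of a genuine homothety outside the cone setting are minor separate points.
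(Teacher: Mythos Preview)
Your approach is genuinely different from the paper's, and the gap you flag in your final paragraph is real and fatal as it stands.

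The paper never touches the special flat connection or special coordinates. Instead it uses the classical Riemannian identity $\nabla_\alpha\nabla_\beta V_\gamma=-R_{\beta\gamma\alpha\delta}V^\delta$ valid for any vector with $\mathscr{L}_Vg=\lambda g$, $\lambda$ constant. By \textbf{Proposition~\ref{yttttx}} such a $V$ is tangent to a $\tau$-fiber, and then eq.~\eqref{juytq22} (i.e.\ $V^iY_{ijk}=0$ fed into the curvature formula \eqref{xrtttt}) kills the right-hand side, so $\nabla V$ is \emph{parallel}. The holonomy algebra of a non-flat special K\"ahler manifold is a sum of $\mathfrak{u}(n_k)$'s, giving a local splitting into a flat $\C^f$ and irreducible K\"ahler factors. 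On an irreducible K\"ahler factor the only parallel $2$-tensors are $g$ and $\omega$, so $\nabla_\alpha V_\beta=a\,g_{\alpha\beta}+b\,\omega_{\alpha\beta}$; computing $\nabla(IV)$ shows $IV\in\mathfrak{con}(\mathscr{X})$, which already gives the complex-analytic orbits. Abelianness then follows factorwise: on each irreducible non-flat factor any two conformal vectors differ by a parallel vector (hence zero), so the algebra there is at most the $2$-plane $\mathrm{span}\{V,IV\}$, and one checks $[V,IV]=0$ directly.

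Your route via the special affine structure is attractive but hinges entirely on the ``rigidity lemma'' that a holomorphic conformal motion preserves the flat special connection $\nabla$. You do not prove this, and it is not obvious: an isometry preserves the Levi-Civita connection of $G$, but $\nabla$ is a \emph{different} torsion-free connection, tied to the symplectic rather than the metric data; the fact that the isometry preserves $G$ and $|Y|$ (hence $Y$ up to a phase via \eqref{xrtttt}) does not by itself pin down $\nabla$. The paper's holonomy argument sidesteps this completely by never leaving the Levi-Civita connection. There is also a secondary loose end in your abelianness paragraph: a homothetic Euler field does \emph{not} commute with a translation field (their bracket is the translation itself), so if both a cone factor and a flat translational factor contributed to $\mathfrak{k}$ you would still need to explain why the homothety does not act on the flat factor --- in the paper's framework this is automatic because the analysis is factor-by-factor via the parallel-tensor classification.
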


We defer the proof to \S... where more details will be provided.

\subsection{Geometry of the $\tau$-fibers}

We need to understand the geometry of the generic fibers of the period maps $\tau$ and $\check{p}$.
The 
connected components of the fibers of $\tau$ are finite covers of the fiber of $\check{p}$, and hence are smooth  quasi-projective algebraic varieties. For a generic special K\"ahler geometry the connected components of the fibers are trivial i.e.\! points.
However we know two important situations where the fiber has
positive dimension.
\begin{exe}\label{X1} The geometry is \emph{isotrivial.} This is automatically true when SUSY is enhanced to $\cn\geq3$.
More generally, it holds \emph{if and only if} at all points in the Coulomb branch the locally light states are effectively described by a SCFT (as contrasted to an IR-free effective theory), see \cite{R7,R7.5}. 
In this case $\tau$ and $\check{p}$ are constant maps, the full special K\"ahler manifold $\mathscr{X}$ is the fiber, and the fiber is flat. The isometry group $\mathsf{Iso}^0(\mathscr{X})$ (and \emph{a fortiori} $\mathsf{Conf}^0(\mathscr{X})$) acts transitively on the fiber $\equiv\mathscr{X}$.
\end{exe}
\begin{exe}\label{X2}
A generic (non-isotrivial) $\C^\times$-isoinvariant geometry (which describes a \emph{superconformal}
$\cn=2$ QFT). The connected components of the fibers have complex dimension 1 and coincide
with the orbits in $\mathscr{X}$ of the superconformal
symmetry $\C^\times$, that is, the symmetry group $\mathsf{Conf}^0(\mathscr{X})$ acts transitively on the fibers which are also flat for the metric induced on the orbits by the ambient special K\"ahler metric. 
\end{exe}  

Recall that a \emph{flat} in a Riemannian manifold is a submanifold which is totally geodesic while the metric induced by the ambient space is flat. The existence of flats of real dimension $\geq2$ is quite constraining for a Riemannian manifold.
We observe that the orbits of $\mathsf{Conf}^0(\mathscr{X})$ in \textbf{Examples \ref{X1},\,\ref{X2}} are \emph{holomorphic flats} (i.e.\! complex submanifold which are also Riemannian flats).

\begin{lem}\label{uuuu123} Let $i_{\boldsymbol{\tau}}\colon \mathscr{X}_{\boldsymbol{\tau}}\hookrightarrow \mathscr{X}$ be a general $\tau$-fiber.
\begin{itemize}
\item[\bf(1)] The restriction $i_{\boldsymbol{\tau}}^*\mspace{1mu}T\mspace{-2mu}\mathscr{X}\to\mathscr{X}_{\boldsymbol{\tau}}$  of the holomorphic tangent bundle  to $\mathscr{X}_{\boldsymbol{\tau}}$ is \emph{flat} for the Chern connection $i^\ast_{\tau(p)}\!\nabla^\mathscr{X}$ (the restriction of the Levi-Civita connection $\nabla^\mathscr{X}$\! on $\mathscr{X}$\!);
\item[\bf(2)] Equip the submanifold $\mathscr{X}_{\boldsymbol{\tau}}\subset\mathscr{X}$ with the induced K\"ahler metric.
{\bf(2a)} The Riemann tensor of $\mathscr{X}_{\boldsymbol{\tau}}$ is Nakano non-positive, hence its Ricci form and biholomorphic sectional curvatures are non-positive. {\bf(2b)} The induced K\"ahler metric on $\mathscr{X}_{\boldsymbol{\tau}}$ is flat if and only if the fiber $\mathscr{X}_{\boldsymbol{\tau}}$ is totally geodesic (hence $\mathscr{X}_{\boldsymbol{\tau}}$ is a \emph{holomorphic flat}).
\end{itemize}
\end{lem}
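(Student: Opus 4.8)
The whole statement reduces to a single local computation, which I would set up as follows. Work near a point $p$ of a general fiber $\mathscr{X}_{\boldsymbol\tau}$ in local special coordinates $a^i$. Since $Y_{ijk}=-i\,\partial_i\partial_j\partial_k\cf$ while $\tau_{ij}=\partial_i\partial_j\cf$, the differential of the period map is $d\tau_{ij}=i\,Y_{ijk}\,da^k$; hence on the open dense locus where $d\tau$ has locally constant rank — which is what ``general fiber'' means — one has $T_p\mathscr{X}_{\boldsymbol\tau}=\ker d\tau_p=\{\,v^k\partial_k : Y_{ijk}(p)\,v^k=0\ \forall\, i,j\,\}$. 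By the total symmetry of $Y$ this says that $Y(v,\cdot,\cdot)=0$ at $p$ whenever $v$ is tangent to the fiber. Feeding this into the identity $R_{i\bar jk\bar l}=G^{\bar nm}Y_{ikm}\bar Y_{\bar j\bar l\bar n}$ of \S\,\ref{curvv}, it follows that $R^{\mathscr X}_{i\bar jk\bar l}$ vanishes at $p$ as soon as \emph{any one} of its four indices is contracted with a vector tangent to $\mathscr{X}_{\boldsymbol\tau}$.

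Part \textbf{(1)} is then immediate: the Chern connection of the restricted Hermitian holomorphic bundle $\big(i_{\boldsymbol\tau}^{*}T\mathscr{X},\,i_{\boldsymbol\tau}^{*}G\big)$ is the restriction $i_{\boldsymbol\tau}^{*}\nabla^{\mathscr X}$, and its curvature $i_{\boldsymbol\tau}^{*}\Theta^{\mathscr X}$ has components $R^{\mathscr X}_{a\bar bk}{}^{l}$ in which the $2$-form indices $a,\bar b$ run over directions tangent to the fiber; by the previous paragraph these all vanish, so $i_{\boldsymbol\tau}^{*}\Theta^{\mathscr X}=0$. For \textbf{(2a)} the plan is to invoke the Gauss equation for the complex submanifold $S=\mathscr{X}_{\boldsymbol\tau}$ of the Kähler manifold $\mathscr{X}$ with the induced metric: writing $\sigma^{\alpha}_{ik}$ for the components of the holomorphic second fundamental form (a symmetric tensor valued in the normal bundle), one has $R^{S}_{i\bar jk\bar l}=R^{\mathscr X}_{i\bar jk\bar l}\big|_{S}-\sum_{\alpha}\sigma^{\alpha}_{ik}\,\overline{\sigma^{\alpha}_{jl}}$. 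The ambient term vanishes (all four indices tangent to $S$), so $R^{S}_{i\bar jk\bar l}=-\sum_{\alpha}\sigma^{\alpha}_{ik}\overline{\sigma^{\alpha}_{jl}}$ and, for an arbitrary tensor $u^{ik}$, $R^{S}_{i\bar jk\bar l}\,u^{ik}\,\bar u^{\bar j\bar l}=-\sum_{\alpha}\big|\sigma^{\alpha}_{ik}u^{ik}\big|^{2}\le0$, i.e. the Riemann tensor of $\mathscr{X}_{\boldsymbol\tau}$ is Nakano non-positive. Non-positivity of its holomorphic bisectional and Ricci curvatures then follows exactly as in \textbf{Corollary \ref{coorr}} (specialize $u^{ik}=v^iw^k$, respectively trace).

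Finally \textbf{(2b)} drops out of the same formula: the induced metric on $\mathscr{X}_{\boldsymbol\tau}$ is flat iff $R^{S}\equiv0$ iff every $\sigma^{\alpha}_{ik}$ vanishes iff $\sigma\equiv0$; and for a complex submanifold of a Kähler manifold, vanishing of $\sigma$ is equivalent to vanishing of the full (real) second fundamental form, i.e. to $S$ being totally geodesic — in which case, being a complex submanifold with flat induced metric, it is a holomorphic flat by definition. I do not expect a genuine obstacle: the only real content is the identification $T_p\mathscr{X}_{\boldsymbol\tau}=\ker Y_p$, after which (1) and (2) are formal consequences of $R=G^{-1}YY$ and the Gauss equation. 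The points that need a little care are purely bookkeeping — restricting to a general fiber so that $T\mathscr{X}_{\boldsymbol\tau}=\ker d\tau$ actually holds, keeping separate the three curvatures in play (the ambient one restricted to fiber directions, the Chern curvature of $i_{\boldsymbol\tau}^{*}T\mathscr{X}$, and the intrinsic curvature of $\mathscr{X}_{\boldsymbol\tau}$), and pinning down the sign in the Kähler Gauss equation so that $\sigma$ enters with a minus.
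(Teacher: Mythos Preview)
Your proposal is correct and follows essentially the same route as the paper: identify $T_p\mathscr{X}_{\boldsymbol\tau}$ with the kernel of $Y$ via $d\tau_{ij}=iY_{ijk}\,da^k$, feed this into $R_{i\bar jk\bar l}=G^{\bar nm}Y_{ikm}\bar Y_{\bar j\bar l\bar n}$ to kill the restricted curvature, and then use the Gauss equation (which the paper phrases as Nakano monotonicity of curvature in holomorphic sub-bundles) for (2a) and (2b). The only cosmetic difference is that the paper invokes the abstract sub-bundle curvature inequality where you write out $R^S=-\sum_\alpha\sigma^\alpha\overline{\sigma^\alpha}$ explicitly; the content is identical.
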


\begin{proof} {\bf(1)} the curvature of the pulled back tangent bundle vanishes.
Indeed if $V\in T\mathscr{X}_{\mspace{-1mu}\boldsymbol{\tau}}$
\be\label{juytq22}
V^i R_{i\bar j k\bar l} = G^{m\bar n} V^i Y_{imk}\bar Y_{\bar j \bar n\bar l}=0
\ee
since (working in special local coordinates) $V^i Y_{ijk}=-iV^i \partial_j\tau_{jk}=0$.
Hence \emph{a fortiori}
\be
R(V,\bar W)_{i \bar j} =V^k \bar W^{\bar l} G^{m\bar n} Y_{imk}\bar Y_{\bar l \bar n\bar k}=0\qquad\text{for }\ V, W\in T\mathscr{X}_{\mspace{-1mu}\boldsymbol{\tau}},
\ee
{\bf (2a)} Since the curvatures of holomorphic bundles are Nakano-monotonic in 
sub-bundles \cite{R16,R21}, we obtain that the sub-bundle is Nakano non-positive. {\bf(2b)} the difference between the curvature of a holomorphic bundle restricted to a sub-bundle and the curvature of the sub-bundle is the Hermitian square of the second fundamental form, so when the difference vanishes also the second fundamental form is zero and the submanifold is totally geodesic.  
\end{proof}

Part {\bf(2a)} says that the curvature is non-positive in the holomorphic sense.
Unfortunately, this result comes short of saying that the underlying Riemannian sectional curvatures are 
non-positive. However this holds when the fiber has complex dimension 1.

\begin{corl}\label{diim1} If $\dim_\C\mspace{-1.5mu}\mathscr{X}_{\mspace{-1mu}\boldsymbol{\tau}}\leq1$, the fiber $\mathscr{X}_{\mspace{-1mu}\boldsymbol{\tau}}$ (with induced metric) is a Hadamard manifold.
\end{corl}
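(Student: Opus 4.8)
\emph{Proof strategy.} The plan is to verify the three defining properties of a Hadamard manifold for $\mathscr{X}_{\boldsymbol\tau}$ equipped with the induced metric --- non-positive sectional curvature, simple-connectedness, and completeness --- and then quote the Cartan--Hadamard theorem. The curvature bound is immediate from \textbf{Lemma \ref{uuuu123}(2a)}, which states that the holomorphic (``biholomorphic'') sectional curvatures of $\mathscr{X}_{\boldsymbol\tau}$ are non-positive: since $\dim_\C\mathscr{X}_{\boldsymbol\tau}=1$ there is at each point a single tangent $2$-plane, necessarily $J$-invariant, so the Gauss (sectional) curvature coincides with the holomorphic sectional curvature and is therefore $\le 0$ everywhere on $\mathscr{X}_{\boldsymbol\tau}$.

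Next I would rule out compactness so as to identify the connected fibre as an affine curve. Since $\mathscr{X}$ is contractible, $H^2(\mathscr{X},\R)=0$ and the Kähler class $[\omega_{\mathscr{X}}]$ vanishes; as a Kähler form integrates positively over every compact complex curve, $\mathscr{X}$ --- and \emph{a fortiori} every $\tau$-fibre --- contains no compact complex curve. Thus $\mathscr{X}_{\boldsymbol\tau}$ is a smooth, connected, non-compact quasi-projective curve, hence affine, and the only simply-connected affine curve is $\mathbb{C}$, so simple-connectedness amounts to $H_1(\mathscr{X}_{\boldsymbol\tau})=0$. To get this, and completeness, I would invoke the explicit structure of the positive-dimensional fibres: pushing the period fibration to the projective model $(\overline{\cn},D)$ and to the Baily--Borel compactification as in the proof of the $\check p$-fibre lemma, and applying Stein factorization, reduces matters to the cases of \textbf{Examples \ref{X1}, \ref{X2}}, where by \textbf{Proposition \ref{uuuaae}} and \textbf{Lemma \ref{uuuu123}(2b)} the fibre is a totally geodesic holomorphic flat and an orbit of the abelian complex group $\mathsf{Conf}^0(\mathscr{X})$ with at most finite stabilizer. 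Restricting to this flat the concurrent vector that encodes the cone structure of $\mathscr{X}$ puts the induced flat metric in a linear normal form; a flat Kähler curve carrying such a normal form and biholomorphic to $\mathbb{C}$ is isometric to $(\mathbb{C},|dz|^2)$, which is simply connected and complete. Cartan--Hadamard then finishes the proof.

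I expect the genuine obstacle to be precisely this last point --- \emph{completeness} of the induced metric on $\mathscr{X}_{\boldsymbol\tau}$. The sign of the curvature and non-compactness are essentially free, but an affine curve equipped with a $K\le 0$ metric need not be complete, so completeness must be extracted from the special-geometry structure rather than from soft arguments, most naturally from the cone/flat normal form of the fibres swept out by $\mathsf{Conf}^0(\mathscr{X})$. Making that normal form rigorous, and in particular controlling the induced metric near the ends of the affine curve, is where the real work lies.
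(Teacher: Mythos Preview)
Your curvature argument --- that in complex dimension $1$ the unique real $2$-plane at each point is $J$-invariant, so the Gauss curvature equals the holomorphic sectional curvature, which is $\le 0$ by \textbf{Lemma \ref{uuuu123}(2a)} --- is exactly what the paper offers. The sentence immediately preceding the corollary says only that ``this holds when the fiber has complex dimension 1,'' and the corollary is then stated without further justification; the paper does not explicitly verify completeness or simple-connectedness of the fibre.

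Your attempt to supply those two missing ingredients is therefore more ambitious than the paper, but it contains a genuine gap. The step ``Stein factorization reduces matters to the cases of \textbf{Examples \ref{X1}, \ref{X2}}'' is unjustified: those examples are introduced as two \emph{known} situations in which the fibre has positive dimension, not as an exhaustive classification, and Stein factorization gives only finiteness of the connected components of a fibre, not its isomorphism type. Similarly, the concurrent vector you invoke to put the induced metric in linear normal form exists only when the ambient special geometry is a K\"ahler cone --- the $\C^\times$-isoinvariant situation of \textbf{Example \ref{X2}} --- and is not available in general. So your route to completeness and simple-connectedness rests on an unproved structural claim about positive-dimensional fibres.

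You are right that completeness is the crux and that the curvature and non-compactness parts are essentially free (your $H^2(\mathscr{X},\R)=0$ argument for non-compactness is clean and correct). The paper simply does not engage with completeness; in the applications it actually makes (generic geometries, where fibres are points, and generic $\C^\times$-isoinvariant geometries, where fibres are $\C^\times$-orbits) the fibres are manifestly complete and simply connected, but the corollary as stated is broader than what is argued.
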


 The situation in the two \textbf{Examples} above is far better: the fibers are \emph{flats}
 not just non-positively curved.
Physical intuition says that this must be the general situation:
the idea is that \textbf{Remark \ref{info}} can be reversed, and an invariance of the vector's kinetic terms should, by supersymmetry, be also an invariance of the vector scalars' kinetic terms.Then the $\tau$-fibers are the orbits of the symmetries, and also flats in the geometry.
Now we prove some geometric results which corroborate this physical intuition.

\subsection{Symmetry and structure}

We start by proving \textbf{Proposition \ref{uuuaae}}. In Riemannian geometry a vector
$V$ satisfying the equation
\be
\mathscr{L}_V g_{\alpha\beta}=\lambda\, g_{\alpha\beta}\quad \lambda\ \text{constant}
\ee
satisfies the identity
\be
\nabla_\alpha\nabla_\beta V_\gamma=- R_{\beta\gamma\alpha\delta}V^\delta.
\ee
In our set-up this formula applies to the vectors $V$ which generate $\mathsf{Con}^0(\mathscr{X})$. 
By \textbf{Proposition \ref{yttttx}} they are tangent to the $\tau$-fiber, and then the \textsc{rhs}
vanishes by eq.\eqref{juytq22}. Therefore the tensor $\nabla_\alpha V_\beta$ is parallel.
Consider the Riemannian holonomy algebra
$\mathfrak{hol}(\mathscr{X})$. Since Ricci-flat is flat for a special geometry,
\be
\mathfrak{hol}(\mathscr{X})\simeq \bigoplus_{k=1}^m \mathfrak{u}(n_k)\qquad \sum_k n_k = r-f  
\ee
and the special K\"ahler metric is locally\footnote{\ We cannot apply De Rham theorem and conclude that $\mathscr{X}$ is globally a product because it is non-complete in general.} isometric to 
a flat $\C^f$ times a product of $m$ \emph{irreducible} K\"ahler manifolds of dimensions $n_1,\dots,n_m$. The holomorphic isometries of the factor $\C^f$ is $U(f)\ltimes\C^f$
whose orbit $\C^f$ is a complex analytic space. We can analyze the symmetries factor space by factor space. On an irreducible K\"ahler manifold we have only two parallel tensor, the metric $g_{\alpha\beta}$
and the K\"ahler form $\omega_{\alpha\beta}=g_{\alpha\gamma}\mspace{2mu}{I^\gamma}_\beta$.
Therefore in each non-trivial local factor
\be\label{y6xxx}
\nabla_\alpha V_\beta= a\, g_{\alpha\beta}+b\, \omega_{\alpha\beta}\quad a,b\ \text{constants}.
\ee
Then
\be
\nabla_\alpha(IV)_\beta = a\,\omega_{\alpha\beta}-b\, g_{\alpha\beta}\quad\Rightarrow\quad \nabla_\alpha (IV)_\beta+\nabla_\beta(IV)_\alpha= -2b\, g_{\alpha\beta}
\ee
i.e.\! if $V\in\mathfrak{con}(\mathscr{X})$ then $IV\in \mathfrak{con}(\mathscr{X})$
which proves \textbf{Proposition \ref{uuuaae}}.

In each non-flat locally irreducible factor there is no non-zero parallel vector. Hence if $V\neq0$,
in eq.\eqref{y6xxx}
$a$ and $b$  cannot be both zero. Then $IV$ is a non-zero vector $(a^\prime,b^\prime)\equiv (-b,a)$
is linear independent of $(a,b)$. Hence if $W$ is a vector generating a symmetry in the
irreducible factor we can find constants $c_1$ and $c_2$ such that
\be
W-c_1\, V-c_2IV
\ee
is parallel, hence zero. Therefore in each non-flat factor either there is no vector generating a symmetry or there are precisely 2: a concurrent one $\nabla_\alpha V_\beta= g_{\alpha\beta}$
and a Killing one $\nabla_\alpha(IV)_\beta=\omega_{\alpha\beta}$. We conclude
\begin{pro} Assume {\bf Condition}. A special K\"ahler manifold $\mathscr{X}$ is \emph{locally} isometric to a product of
\begin{itemize}
\item a flat space
\item irreducible cones over Sasaki manifolds whose continuous symmetry is \emph{exactly} $\C^\times$
\item irreducible K\"ahler manifolds with no continuous symmetry. 
\end{itemize}
\end{pro}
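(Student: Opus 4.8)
The plan is to assemble the statement from the local de Rham decomposition together with the identity $\nabla_\alpha\nabla_\beta V_\gamma=-R_{\beta\gamma\alpha\delta}V^\delta$ for conformal vectors established just above. First I would record that for a special geometry Ricci-flat implies flat (\textbf{Corollary \ref{coorr}(2)}), so no irreducible factor can be Ricci-flat and, as already displayed, $\mathfrak{hol}(\mathscr{X})\simeq\bigoplus_{k=1}^m\mathfrak{u}(n_k)$ with $\sum_k n_k=r-f$. By the \emph{local} de Rham / Berger splitting — which is valid even though $\mathscr{X}$ is in general incomplete, since only the statement about germs is needed — $\mathscr{X}$ is locally isometric to a flat $\C^f$ times finitely many irreducible non-flat K\"ahler germs $N_1,\dots,N_m$. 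The flat factor $\C^f$ is the first bullet; its connected holomorphic isometry group is $U(f)\ltimes\C^f$, which acts transitively on it.

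Next I would analyze a single non-flat irreducible factor $N_k$. Its only parallel $(0,2)$-tensors are $g_{\alpha\beta}$ and $\omega_{\alpha\beta}=g_{\alpha\gamma}{I^\gamma}_\beta$, and it carries no nonzero parallel vector. By \textbf{Proposition \ref{yttttx}} every vector $V$ generating a holomorphic conformal motion of $\mathscr{X}$ is tangent to the $\tau$-fibers, so the right-hand side of $\nabla_\alpha\nabla_\beta V_\gamma=-R_{\beta\gamma\alpha\delta}V^\delta$ vanishes by eq.\eqref{juytq22}; hence $\nabla_\alpha V_\beta$ is parallel, and on $N_k$ it must take the form $\nabla_\alpha V_\beta=a\,g_{\alpha\beta}+b\,\omega_{\alpha\beta}$ with $a,b$ constant, as in eq.\eqref{y6xxx}. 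Here one should verify that the orthogonal projection of $V$ onto $N_k$ is again a conformal vector of $N_k$ and that the off-diagonal blocks of $\nabla V$ between distinct factors vanish, being parallel sections of $\mathrm{Hom}(TN_i,TN_j)$; this is the one genuine piece of bookkeeping.

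Then I would run the linear-algebra argument sketched before the statement: since $\nabla_\alpha(IV)_\beta=a\,\omega_{\alpha\beta}-b\,g_{\alpha\beta}$, the assignment $W\mapsto(a_W,b_W)\in\R^2$ on the conformal algebra of $N_k$ is injective (its kernel consists of parallel, hence zero, vectors), and if $V\neq0$ then $(a,b)$ and $(-b,a)$ are linearly independent, so the conformal algebra of $N_k$ is either $0$ or exactly two real dimensional, spanned after rescaling by a concurrent vector $V$ with $\nabla_\alpha V_\beta=g_{\alpha\beta}$ and the Killing vector $IV$ with $\nabla_\alpha(IV)_\beta=\omega_{\alpha\beta}$. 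By Yano's theorem \cite{yano} a Riemannian manifold admitting a concurrent vector is (locally) a metric cone $C(S)$; since the cone is K\"ahler and $IV$ is Killing it is the Reeb field, so $S$ is Sasaki and the continuous symmetry of $N_k$ is exactly the $\C^\times$ generated by $V$ and $IV$, giving the second bullet. The factors with trivial conformal algebra yield the third bullet, and these three possibilities exhaust the list of local factors.

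I expect the only real obstacle to be that bookkeeping in passing from a conformal vector of $\mathscr{X}$ to conformal vectors of the individual de Rham factors, i.e.\! making precise that $\nabla V$ is block-diagonal for the splitting and that ``continuous symmetry exactly $\C^\times$'' refers to the full connected holomorphic conformal group of the factor; everything else is either already proved above or a direct appeal to \cite{yano} and the standard cone–Sasaki dictionary.
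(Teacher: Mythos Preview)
Your proposal is correct and follows essentially the same route as the paper's own argument in \S.\,2.5: parallelism of $\nabla_\alpha V_\beta$ via \textbf{Proposition \ref{yttttx}} and eq.\eqref{juytq22}, local de Rham splitting forced by $\mathfrak{hol}(\mathscr{X})\simeq\bigoplus_k\mathfrak{u}(n_k)$, the two-dimensional bound on the conformal algebra of each irreducible factor from eq.\eqref{y6xxx}, and the concurrent/Killing pair giving the cone--Sasaki structure via \cite{yano}. You are slightly more explicit than the paper about the block-diagonal bookkeeping and the Reeb-field identification, which the paper simply subsumes under ``we can analyze the symmetries factor space by factor space''; this is an improvement in rigor rather than a different approach.
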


\section{Stable geodesics}\label{stableG}

We say that a geodesic $\gamma$ is \emph{stable between points $p$ and $q$} if the geodesic arc connecting them is locally length-minimizing between all nearby curves joining $p$ and $q$,
that is, if the Jacobi field operator $\boldsymbol{J}$  is positive-definite on the space
of normal vector fields along $\gamma$ which vanish at both $p$ and $q$. 
We say that $\gamma$ is \emph{stable} if it is stable between all pairs of its points.

\subsection{Complete stable geodesics}

A geodesic on a Riemannian manifold $X$ is \emph{complete} iff it can be extended to a a geodesic $\gamma\colon(-\infty,+\infty)\to X$
 where, as always, we parametrize geodesics with their arc-length. In particular closed geodesics are complete.
 
 \medskip
 
 We consider geodesics $\gamma$ on the simply-connected special K\"ahler manifold
  $\mathscr{X}$ which are both complete and stable. 
   
 \begin{pro}\label{xxxy} In a special K\"ahler manifold $\mathscr{X}$ -- not necessarily satisfying {\bf Condition} -- the stable complete geodesics are fully contained in a $\tau$-fiber $\mathscr{X}_{\mspace{-1mu}\boldsymbol{\tau}}$. 
 They are also geodesics for the induced K\"ahler metric on $\mathscr{X}_{\mspace{-1mu}\boldsymbol{\tau}}$.
 \end{pro}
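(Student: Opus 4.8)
The plan is to run a Myers-type second-variation argument: stability will force the Ricci curvature of $\mathscr{X}$ to vanish in the direction of $\gamma$ along the entire geodesic, and the special-geometry identity \eqref{xrtttt} then converts this into the statement that the period matrix $\tau_{ij}$ is locally constant along $\gamma$. Concretely, parametrising $\gamma$ by arc length, for a compact sub-arc $\gamma|_{[a,b]}$ and a normal field $W$ vanishing at the endpoints the index form (second variation of length) is $\boldsymbol I_{[a,b]}(W,W)=\int_a^b\big(|\nabla_{\dot\gamma}W|^2-\langle R(W,\dot\gamma)\dot\gamma,W\rangle\big)\,dt$, and $\gamma$ being stable means exactly that $\boldsymbol I_{[a,b]}(W,W)>0$ for every non-trivial such $W$ and every $a<b$. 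Since $\dot\gamma$ is parallel along a geodesic, its orthogonal complement is preserved by parallel transport, so I would choose a parallel orthonormal frame $E_1,\dots,E_{2r-1}$ of the normal bundle of $\gamma$; taking $W=fE_i$ with a scalar function $f$ vanishing at $a,b$ gives $\nabla_{\dot\gamma}W=f'E_i$ and, summing over $i$ (the omitted $E_0=\dot\gamma$ term contributes nothing to the Ricci trace),
\[
\sum_{i=1}^{2r-1}\boldsymbol I_{[a,b]}(fE_i,fE_i)=\int_a^b\!\Big((2r-1)\,(f')^2-f^2\,\mathsf{Ric}(\dot\gamma,\dot\gamma)\Big)\,dt .
\]

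Next I would show that $\mathsf{Ric}(\dot\gamma,\dot\gamma)$ vanishes identically along $\gamma$. By \textbf{Corollary \ref{coorr}(1)} this quantity is everywhere $\geq0$. If it were $>0$ at some parameter value $t_0$, continuity gives $\mathsf{Ric}(\dot\gamma,\dot\gamma)\geq c>0$ on an interval $[t_0-\delta,t_0+\delta]$. For arbitrary $\ell>0$ I apply the identity above to the sub-arc $[a,b]=[t_0-\ell/2,\,t_0+\ell/2]$ — which exists precisely because $\gamma$ is \emph{complete} — with the Myers test function $f(t)=\sin(\pi(t-a)/\ell)$. Then $\int_a^b(f')^2\,dt=\pi^2/(2\ell)$, while $f(t)=\cos(\pi(t-t_0)/\ell)$ so $f^2\geq\frac12$ on $[t_0-\delta,t_0+\delta]$ as soon as $\ell\geq4\delta$, whence $\int_a^b f^2\,\mathsf{Ric}(\dot\gamma,\dot\gamma)\,dt\geq c\delta$ and therefore
\[
\sum_{i=1}^{2r-1}\boldsymbol I_{[a,b]}(fE_i,fE_i)\ \leq\ \frac{(2r-1)\pi^2}{2\ell}-c\delta\ <\ 0
\]
once $\ell$ is large, contradicting stability. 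Hence $\mathsf{Ric}(\dot\gamma,\dot\gamma)\equiv0$ along $\gamma$.

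Finally I would read off the fibre statement from the curvature formula. Writing $\dot\gamma=V+\bar V$ with $V=\dot\gamma^{1,0}$ and tracing \eqref{xrtttt}, one has $\mathsf{Ric}(\dot\gamma,\dot\gamma)\propto G^{k\bar l}R_{i\bar jk\bar l}V^i\bar V^{\bar j}=\|\iota_V Y\|^2$, where $(\iota_V Y)_{km}=Y_{ikm}V^i$; so $\iota_V Y=0$, i.e.\ $Y_{ikm}V^i=0$ for all $k,m$. In special coordinates $Y_{ikm}=-i\,\partial_i\tau_{km}$, hence $V^i\partial_i\tau_{km}=0$, and together with holomorphy of $\tau$ this says $d\tau(\dot\gamma)=0$. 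Therefore $\tau$ is constant along the connected curve $\gamma$, i.e.\ $\gamma\subset\mathscr{X}_{\boldsymbol\tau}$ for a single value $\boldsymbol\tau$. That $\gamma$ is then a geodesic of the induced metric is automatic: a geodesic of $\mathscr{X}$ has vanishing covariant acceleration, and by the Gauss formula the tangential part of the acceleration along the submanifold $\mathscr{X}_{\boldsymbol\tau}$ equals the covariant acceleration for the induced metric, which therefore also vanishes.

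The genuine content is the Myers estimate of the second step, the key point being to \emph{centre} the integration interval on the hypothetical point of positive Ricci so that completeness can be used to send $\ell\to\infty$ while the curvature contribution stays bounded below; the signs and normalisations there need care. One minor caveat to flag is that the fibre $\tau^{-1}(\boldsymbol\tau)$ need not be smooth, so ``geodesic of the induced metric'' is strictly a statement about the smooth locus of the fibre, which contains $\gamma$ (at least for generic $\boldsymbol\tau$); this does not affect the argument.
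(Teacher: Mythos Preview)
Your proof is correct and shares with the paper the Myers second-variation setup (parallel orthonormal normal frame, sinusoidal test function, trace over the frame to produce the Ricci term). Where the two diverge is in how the conclusion is extracted. You argue \emph{pointwise}: centring the interval on a hypothetical point with $\mathsf{Ric}(\dot\gamma,\dot\gamma)>0$ and sending $\ell\to\infty$ forces $\mathsf{Ric}(\dot\gamma,\dot\gamma)\equiv0$, and then the special-geometry identity $R_{i\bar j}V^i\bar V^{\bar j}=\|\iota_VY\|^2$ together with $Y_{ijk}=-i\,\partial_i\tau_{jk}$ gives $d\tau(\dot\gamma)=0$. The paper instead first proves an \emph{integral} estimate (its \S3.2): for any curve $\xi$ and positive weight $f$, $\big(\int f\,\mathsf{Ric}(\dot\xi,\dot\xi)\big)\big(\int f^{-1}\big)\geq \boldsymbol d(\tau(\xi(a)),\tau(\xi(b)))^2$, interpreting the Ricci integral as the energy of the image curve in the Hadamard space $\mathscr H$ via $\mathsf{Ric}_{\mathscr X}=\tau^*\omega_{\mathscr H}$. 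Plugging the Myers inequality into this bound yields $\boldsymbol d(\tau(p_1),\tau(p_2))^2\leq\pi(2r-1)\tan(\pi\ell/2L)\to0$ as $L\to\infty$. Your route is more self-contained and avoids the auxiliary energy lemma; the paper's route is less direct but produces a quantitative distance inequality that it then reuses for semi-complete and finite-length geodesics and for the Weil--Petersson case. For the last sentence the paper simply notes that a locally length-minimising curve in $\mathscr X$ is \emph{a fortiori} locally length-minimising within $\mathscr X_{\boldsymbol\tau}$; your Gauss-formula argument is an equally valid (and slightly sharper) justification.
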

 
We defer the proof to \S.\,\ref{s:mayer} after some preparation.
Here we discuss a few first consequences.
 Together with \textbf{Corollary \ref{diim1}}, the \textbf{Proposition} gives:
 
 \begin{corl} If the
 complex dimension of the $\tau$-fibers $\mathscr{X}_{\mspace{-1mu}\boldsymbol{\tau}}$ is $\leq1$, two distinct complete stable geodesics in $\mathscr{X}$ cross at most once. In particular: in this case there is no stable closed geodesic.
 \end{corl}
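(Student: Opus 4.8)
The plan is to reduce everything to the Cartan--Hadamard theorem via the two results stated just above. Proposition \ref{xxxy} confines any complete stable geodesic of $\mathscr{X}$ to a single $\tau$-fiber and makes it a geodesic of that fiber for the induced K\"ahler metric; Corollary \ref{diim1} says that when $\dim_\C\mathscr{X}_{\boldsymbol\tau}\le 1$ the fiber (equivalently, since Hadamard manifolds are connected by definition, each of its connected components) is a Hadamard manifold, i.e.\ complete, simply connected and non-positively curved. On such a manifold $\exp_p$ is a diffeomorphism at every point $p$, and that one fact does all the work.

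First I would take two distinct complete stable geodesics $\gamma_1,\gamma_2$ and assume they meet at a point $p$. By Proposition \ref{xxxy} each $\gamma_i$ lies in a $\tau$-fiber; since both contain $p$, that fiber is $\mathscr{X}_{\boldsymbol\tau}$ with $\boldsymbol\tau=\tau(p)$ in both cases, and, being connected and passing through $p$, both $\gamma_i$ lie in the same connected component $F\ni p$ of $\mathscr{X}_{\boldsymbol\tau}$. By Corollary \ref{diim1}, $F$ is a Hadamard manifold, and by Proposition \ref{xxxy} the $\gamma_i$ are geodesics of $F$, already defined on all of $\R$ since they are complete in $\mathscr{X}$. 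Parametrizing by arc length with $\gamma_i(0)=p$ we have $\gamma_i(t)=\exp_p\!\big(t\,\dot\gamma_i(0)\big)$ with $\dot\gamma_i(0)$ a unit vector, and by Cartan--Hadamard $\exp_p$ is injective.

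Next I would run the standard injectivity argument. For a single $i$, injectivity of $\exp_p$ forces $t\mapsto\gamma_i(t)$ to be injective, so $\gamma_i$ is an embedded line; in particular it is not closed, which already gives the ``no stable closed geodesic'' clause. For the crossing count, suppose $\gamma_1$ and $\gamma_2$ also meet at a point $q\neq p$, say $q=\gamma_1(t_0)=\gamma_2(s_0)$; by the injectivity just proved $t_0\neq0$ and $s_0\neq0$. Then $\exp_p\!\big(t_0\,\dot\gamma_1(0)\big)=\exp_p\!\big(s_0\,\dot\gamma_2(0)\big)$, hence $t_0\,\dot\gamma_1(0)=s_0\,\dot\gamma_2(0)$; taking norms gives $|t_0|=|s_0|$, so $\dot\gamma_1(0)=\pm\dot\gamma_2(0)$, and therefore $\gamma_1$ and $\gamma_2$ have the same image and agree as geodesics, contradicting $\gamma_1\neq\gamma_2$. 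Thus two distinct complete stable geodesics cross at most once.

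There is no genuine obstacle: the mathematical content sits entirely in Proposition \ref{xxxy} and Corollary \ref{diim1}, which are already available. The only points requiring a little care are bookkeeping ones: that both geodesics land in the \emph{same} connected component of the possibly disconnected fiber (true because a geodesic is connected and both contain $p$), that Corollary \ref{diim1} is applied component by component, and that completeness in $\mathscr{X}$ is inherited as ``defined on all of $\R$'' inside $F$, so no separate completeness discussion for the induced metric is needed.
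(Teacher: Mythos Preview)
Your argument is correct and is exactly the route the paper takes: the paper's entire proof is the one-line remark ``Together with \textbf{Corollary \ref{diim1}}, the \textbf{Proposition} gives'' the result, and you have simply written out the standard Cartan--Hadamard consequence that this sentence is invoking. Your bookkeeping about connected components and about the geodesics already being defined on all of $\R$ in the fiber is a welcome clarification the paper leaves implicit.
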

 This already settles the issue for both generic special K\"ahler geometries and 
 generic $\C^\times$-isoinvariant special geometries. On the opposite extremum, \emph{isotrivial} geometries,
 $\mathscr{X}$ is just the Euclidean space $\R^{2r}$ and the statement is trivially true.
 This observation suffices to cover all geometries with $\dim_\C\mspace{-2mu}\mathscr{X}\leq2$.
 
However the physical intuition mentioned in the \textbf{Introduction} suggests that the statement holds more in general. Moreover the last sentence of \textbf{Proposition \ref{xxxy}} holds automatically when the $\tau$-fibers are flats. We next show that the physical intuition is correct, at least about the 
number of crossing of complete stable geodesics.

  \begin{pro}
  In any special K\"ahler geometry two distinct stable complete geodesics cross at most once. In particular there are no stable closed geodesics.
  \end{pro}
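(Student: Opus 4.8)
The plan is to push the whole question down to a single $\tau$-fiber. Let $\gamma_1\neq\gamma_2$ be stable complete geodesics of $\mathscr{X}$ and suppose they meet at a point $p$. By \textbf{Proposition \ref{xxxy}} each of them lies in a $\tau$-fiber, and since $p$ lies on both, that fiber is one and the same: the component $F\equiv\mathscr{X}_{\boldsymbol{\tau}}$ of $\tau^{-1}(\boldsymbol{\tau})$ through $p$, with $\boldsymbol{\tau}=\tau(p)$. Since each $\gamma_a$ is a geodesic of $\mathscr{X}$ contained in $F$, decomposing the identity $\nabla^{\mathscr{X}}_{\dot\gamma_a}\dot\gamma_a=0$ into its parts tangent and normal to $F$ shows at once that $\gamma_a$ is a geodesic for the induced K\"ahler metric on $F$ and that the second fundamental form of $F\subset\mathscr{X}$ vanishes in the direction of motion, $\mathrm{II}(\dot\gamma_a,\dot\gamma_a)\equiv0$ along $\gamma_a$.

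The second step is a curvature computation along the two geodesics. For any $u\in TF$ one has, in special coordinates, $u^i Y_{ijk}=-i\,u^i\partial_i\tau_{jk}=0$, so by \eqref{xrtttt} --- exactly as in \eqref{juytq22} --- the ambient curvature operator $R^{\mathscr{X}}(\,\cdot\,,u)u$ vanishes identically whenever $u$ is tangent to $F$. Inserting $\mathrm{II}(\dot\gamma_a,\dot\gamma_a)=0$ into the Gauss equation of the submanifold $F\subset\mathscr{X}$ then gives $K^{F}(\dot\gamma_a,V)=-\|\mathrm{II}(\dot\gamma_a,V)\|^{2}\le0$ for every $V$ orthogonal to $\dot\gamma_a$; that is, the Jacobi operator of $F$ along each $\gamma_a$ is non-positive. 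From this I would extract the three facts that matter: (i) the index form of $F$ (hence also that of $\mathscr{X}$) on any arc of $\gamma_a$ is positive definite on variations vanishing at the endpoints, so the arc is a nondegenerate local length minimizer; (ii) $\gamma_a$ has no conjugate points in $F$; and (iii) every Jacobi field along $\gamma_a$ has convex squared norm. When the $\tau$-fiber has $\dim_{\C}\le1$ all of this is already contained in \textbf{Corollary \ref{diim1}}, so the genuinely new case is that of fibers of higher dimension.

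The third step turns ``no conjugate points and convex Jacobi fields along $\gamma_1,\gamma_2$'' into ``at most one intersection.'' Suppose $\gamma_1$ and $\gamma_2$ met at a second point $q\neq p$; then they cut out a geodesic bigon in $F$ whose two sides each obey (i)--(iii). I would lift this bigon to the universal cover $\widehat F$ of $F$ and run a Cartan--Hadamard / convexity-of-distance argument along its two sides --- equivalently, a Rauch index comparison against Euclidean space --- in which property (iii) is precisely what forces the bigon to be degenerate, i.e.\ $\gamma_1=\gamma_2$, a contradiction. Applied to a non-constant geodesic loop, the same bigon-exclusion yields the concluding statement that there are no closed stable geodesics.

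The hard part is exactly this last step. A generic $\tau$-fiber is neither geodesically complete nor simply connected, and Step 2 controls the curvature of $F$ only \emph{along} $\gamma_1$ and $\gamma_2$, not on a disk spanning the bigon, so neither the classical Cartan--Hadamard theorem nor a naive Gauss--Bonnet argument on a spanning disk is available. The comparison therefore has to be carried out purely in terms of Jacobi fields along the two geodesics themselves --- which is legitimate because $\gamma_1$ and $\gamma_2$, being complete geodesics of $\mathscr{X}$, remain complete inside $F$ --- and it is property (iii) that makes such a one-dimensional comparison suffice. Making this estimate watertight, and dealing correctly with the homotopy/loop issues in the non-simply-connected situation, is where the real work lies.
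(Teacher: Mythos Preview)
Your first two steps are fine and essentially match the paper: both geodesics land in the same $\tau$-fiber $F$, and you correctly observe (via \eqref{juytq22}) that the ambient curvature $R^{\mathscr{X}}(\cdot,u)u$ vanishes for $u\in TF$. But then you take a detour through the Gauss equation, trading the \emph{equality} $R^{\mathscr{X}}|_F=0$ for the \emph{inequality} $K^F(\dot\gamma_a,\cdot)\le0$ along $\gamma_a$, and this is exactly what creates the gap you yourself flag in Step~3: a Cartan--Hadamard or Rauch-type argument needs curvature control on a whole region (or completeness plus global nonpositive curvature), not just along two curves, and you do not have that. Your ``one-dimensional comparison'' sketch does not close this: convexity of $\|J\|^2$ along each side of a bigon does not by itself exclude the bigon.

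The paper avoids the difficulty entirely by staying with the ambient connection rather than passing to the intrinsic geometry of $F$. The content of \textbf{Lemma~\ref{uuuu123}(1)} is not merely that certain sectional curvatures vanish but that the \emph{bundle} $i^{\ast}_{\tau(p)}T\mathscr{X}\to F$ is flat for the restricted Levi-Civita connection $i^{\ast}_{\tau(p)}\nabla^{\mathscr{X}}$. Taking the fiber of the simply-connected cover $\mathscr{X}$ to be itself simply-connected (the paper asserts this; your concern about it is a separate matter), the flat bundle is globally trivial, $i^{\ast}_{\tau(p)}T\mathscr{X}\simeq F\times\C^{r}$, the isomorphism given by parallel sections. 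Now $\dot\gamma_a$ is $\nabla^{\mathscr{X}}$-parallel along $\gamma_a$ --- that is precisely what being an ambient geodesic means --- so under this trivialization $\dot\gamma_a(s)\equiv v_a$ is a \emph{constant} vector in $\C^{r}$, fixed once and for all at $p$. Since $v_1\neq v_2$, the images of $\gamma_1,\gamma_2$ under the resulting developing map $F\to\C^{r}$ are two distinct straight lines through the image of $p$; they meet only there, and as the developing map restricted to each $\gamma_a$ is injective, the geodesics cannot recross. No Gauss equation, no second fundamental form, no Jacobi-field convexity is needed.

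In short: you had the decisive identity in hand but fed it into a submanifold-curvature inequality, whereas the paper uses it directly as flatness of a bundle with connection --- which linearizes the problem and makes your Step~3 unnecessary.
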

  
 \begin{proof} Suppose we have two distinct geodesics $\gamma_1$, $\gamma_2$ which cross at one point $p\in\mathscr{X}$. Then by \textbf{Proposition \ref{xxxy}} $\tau(\gamma_1)=\tau(\gamma_2)=\tau(p)$ and the two geodesics are both fully contained in the same fiber $\mathscr{X}_{\tau(p)}$ and their tangent vectors
 \be
 \dot\gamma_a(s_a)\in T\mathscr{X}_{\tau(p)}\big|_{\gamma_a(s_a)}\subset i_{\tau(p)}^*\mspace{1.5mu}T\mspace{-2mu}\mathscr{X}\big|_{\gamma_a(s_a)}\quad \forall\; s_a,\quad a=1,2.
 \ee
 The fiber $\mathscr{X}_{\tau(p)}$ is simply-connected and
 the holomorphic bundle $i_{\tau(p)}^*\mspace{1.5mu}T\mspace{-2mu}\mathscr{X}$
 is flat for the Levi-Civita connection $i^\ast_{\tau(p)}\!\nabla^\mathscr{X}$ of the total manifold $\mathscr{X}$ (cf.\! \textbf{Lemma \ref{uuuu123}(1)}), so that we have an isomorphism
 \be\label{oppiu}
 i_{\tau(p)}^*\mspace{1.5mu}T\mspace{-2mu}\mathscr{X}\simeq \mathscr{X}_{\tau(p)}\times \C^{r}
 \ee 
 given by $i^\ast_{\tau(p)}\!\nabla^\mathscr{X}\!$-parallel sections.
 The tangent vectors to the geodesics, $\dot\gamma_1(s_1),\,\dot\gamma_2(s_2)$, are $i^\ast_{\tau(p)}\!\nabla^\mathscr{X}\!$-parallel
 for all times $s_1,\,s_2$, so they project to constant vectors $v_1,\,v_2$ in the factor $\C^r$ of \eqref{oppiu}. At the starting point the vectors
 $\dot\gamma_1(0)\equiv v_1$ and $\dot\gamma_2(0)\equiv v_2$ diverge because the two geodesics are leaving the point $p$ in different directions. Since the vectors $v_1$, $v_2$ are constant, they remain diverging at all times $s_1,\,s_2$, and 
 the two distinct geodesics cannot converge back to the same point. 
  \end{proof}
  
  \begin{rem} The tangent vectors $\dot\gamma_a(s_a)$ are parallel also for the sub-bundle connection
  on $T\mspace{-2mu}\mathscr{X}_{\tau(p)}$. Thus the existence of a complete stable geodesic in $\mathscr{X}$
  yields conditions on the second fundamental form of the fiber containing it. In particular a fiber of complex dimension 1 which contains a geodesic is a flat of $\mathscr{X}$. (Cf.\! the physical intuition above).
  \end{rem} 
  
  \begin{rem} Replacing the cover $\mathscr{X}$ with the enhanced moduli $\cn$,
  the corresponding bundle $i_{\check{p}(p)}^*\mspace{1.5mu}T\mspace{-2mu}\mathscr{\cn}$ 
  over the $\check{p}$-fiber will still be flat but not globally trivial because of monodromy.  
  \end{rem}

\subsection{Elementary bounds on Ricci integrals}

Let $\xi\colon[a,b]\to\mathscr{X}$ be any smooth curve in the special K\"ahler manifold $\mathscr{X}$ parametrized
by its arc-length $s$, while $f\colon [a,b]\to \R_{\geq0}$ is any smooth \emph{positive} function.
We are interested in a lower bound to the integral
\be\label{eerrqwe}
I[\xi,f]\equiv\int_a^b ds\, f(s)\,\mathrm{Ric}(\dot\xi,\dot\xi).
\ee 
where $\mathrm{Ric}(\dot\xi,\dot\xi)\equiv R_{i\bar j}\,\dot\xi^i(s)\,\dot\xi^{\bar j}(s)$ and $R_{i\bar j}$ is the Ricci curvature of the special K\"ahler metric. Suppose first that $f\equiv1$.
We write $\eta(s)$ for the image curve $\tau(\xi(s))$ in the Siegel upper half-space $\mathscr{H}\equiv Sp(2r,\R)/U(r)$.
From eq.\eqref{poiuq} we get
\be
I[\xi,1]=\int_a^b ds\,\boldsymbol{G}_{\alpha\bar\beta}\, \dot\eta^\alpha \dot\eta^{\bar\beta}=\mathbf{E}[\eta]
\ee
where $\boldsymbol{G}_{\alpha\bar\beta}$ is the symmetric (K\"ahler) metric in the Siegel upper half-space and $\mathbf{E}[\eta]$ is the energy (action for physicists) of a particle, moving in the Siegel space $\mathscr{H}$,
along the curve $\eta(s)$ from the point $\eta(a)$ at time $s=a$ to the point $\eta(b)$ at $s=b$. 
In the space of all paths $\rho\colon[a,b]\to\mathscr{H}$ satisfying the boundary conditions $\rho(a)=\eta(a)$,
$\rho(b)=\eta(b)$, the minimum of the action
$\mathbf{E}[\rho]$ is attained when $\rho(s)$ is the solution to the Euler-Lagrange equations: the particle moves along a geodesic in $\mathscr{H}$ which connects $\eta(a)$ to $\eta(b)$ at constant speed. The target space $\mathscr{H}$ is
Hadamard, so there is precisely \emph{one} geodesic connecting any two points which is stable and length-minimizing: the length of the arc of this geodesic is equal to the distance $\boldsymbol{d}(\eta(a),\eta(b))$
where $\boldsymbol{d}(\cdot,\cdot)$ is the distance function between two points in the symmetric Hadamard manifold $\mathscr{H}$.
The action computed along this solution is the \emph{absolute minimum} of the functional $\mathbf{E}[\rho]$
in the space of curves $\rho(s)$ with boundary conditions $\rho(a)=\tau(\xi(a))$ and $\rho(b)=\tau(\xi(b))$.
The constant velocity is just $\boldsymbol{d}(\eta(a),\eta(b))/(b-a)$. Therefore we get that \emph{for all
paths $\xi\colon[a,b]\to\mathscr{X}$}   
\be
I[\xi,1]\geq \frac{\boldsymbol{d}(\tau(\xi(a),\tau(\xi(b))^2}{b-a}.
\ee
Now we consider the more general integral \eqref{eerrqwe} with $f(s)$ positive.
We reparametrize ``time'' in our auxiliary mechanical problem by replacing
\be\label{piouq}
ds\leadsto dt\equiv \frac{ds}{f(s)}\quad\text{i.e.}\quad \frac{ds}{dt}=f(s).
\ee
Note that the map  given by the solution of the ODE \eqref{piouq}
\be
[a,b]\to [t(a),t(b)],\qquad s\mapsto t(s)\equiv t(a)+\int_{a}^s \frac{ds}{f(s)},
\ee
 is globally
invertible because the function is monotonically increasing since $f(s)>0$. 
Now
\be\label{chichi}
I[\xi,f]=\int_{t(a)}^{t(b)}\,dt\,\boldsymbol{G}_{\alpha\bar\beta}\, \dot\chi^\alpha \dot\chi^{\bar\beta}=
\mathbf{E}[\chi]
\ee
where $\chi\colon[t(a),t(b)]\to \mathscr{H}$ is the reparametrized image curve
\be
\chi(t)=\eta(s(t))\equiv \tau(\xi(s(t)))\qquad \dot\chi(t)\equiv \frac{d\chi}{dt}=f(s)\,\frac{d\eta}{ds}.
\ee
The absolute minimum of the energy functional \eqref{chichi} is attained by the same trajectory as before,
but now covered at a different speed which is constant in time $t$ instead of time $s$.
The total time interval is now 
\be
\Delta t= \int_{t(a)}^{t(b)} dt= \int_a^b \frac{ds}{f(s)},
\ee
 thus, for all path $\xi\colon[a,b]\to\mathscr{X}$ and all positive functions $f(s)$, one has
\be\label{hytqwert}
\int_a^b \frac{ds}{f(s)}\, \int_a^b ds\,f(s)\,\mathrm{Ric}(\dot\xi,\dot\xi) \geq  
\boldsymbol{d}(\tau(\xi(a),\tau(\xi(b))^2.
\ee


\subsection{Proof of {\bf Proposition \ref{xxxy}}: Myers' argument}\label{s:mayer}

 Let $\gamma\colon (-\infty,+\infty)\to \mathscr{X}$ be a complete stable geodesic parametrized by arc-length.
We consider two arbitrary points $p_1$ and $p_2$ on $\gamma$ and call $2\ell$ the length of the geodesics arc between them. We choose the parametrization so that
 $p_1=\gamma(-\ell)$ and $p_2=\gamma(\ell)$. We fix a real number $L>\ell$.
Since $\gamma$ is complete, there exist points
$p_0=\gamma(-L)$ and $p_3=\gamma(L)$. The length of the longer arc from $p_0$ to $p_3$ is $2L$.

Since the geodesic $\gamma$ is stable, in the larger geodesic arc from $p_0$ to $p_3$ there are
no points conjugate to $p_0$, so the Jacobi operator $\boldsymbol{J}$ is non-negative acting on the space
of normal vector fields along $\gamma$ which vanish at $p_0$ and $p_3$.

We now proceed by mimicking the proof of the Myers theorem (see e.g.\! \cite{R19} \textbf{Corollary 6.3.1}). We consider a $(2r-1)$-tuple of vector fields 
along $\gamma$, 
\be
X_i(s)\quad \text{($i=1,\dots,2r-1$)},
\ee
 which form an orthonormal basis in the space of vectors normal to $\dot\gamma$, i.e.\!
\be
\langle \dot\gamma(s),X_i(s)\rangle=0\qquad \langle X_i(s),X_j(s)\rangle=\delta_{ij}\quad\ \forall\;s,
\ee
 and are invariant by parallel transport along
the geodesic, i.e.\! such that $\nabla_{\mspace{-1mu}\dot\gamma}\mspace{2mu}X_i=0$. We focus on the normal vector fields along the longer geodesic arc
\be
Y_i(s)\overset{\rm def}{=}\cos\!\left(\frac{\pi}{2L}s\right) X_i(s),\qquad i=1,\dots,2r-1,
\ee
which vanish at $p_0\equiv\gamma(-L)$ and $p_1\equiv\gamma(L)$. The stability condition gives (no sum over repeated indices!)
\begin{equation}\label{rrerr}
\begin{split}
0\leq \langle Y_i, \boldsymbol{J}Y_i\rangle&\equiv
\int\limits_{-L}^{L} ds\Big(-\langle \ddot Y_i,Y_i\rangle -\langle R(Y_i,\dot\gamma)\dot\gamma,Y_i\rangle\Big)=\\
&=\int\limits_{-L}^{L} ds\; \cos^2\!\left(\frac{\pi}{2L}s\right)\!
\Big(\frac{\pi^2}{4L^2} -\langle R(X_i,\dot\gamma)\dot\gamma,X_i\rangle\Big)=\\
&= \frac{\pi^2}{4L}- \int\limits_{-L}^{L} ds\; \cos^2\!\left(\frac{\pi}{2L}s\right)\langle R(X_i,\dot\gamma)\dot\gamma,X_i\rangle
\end{split}
\end{equation} 
Now we sum \eqref{rrerr} over the repeated index. Since $\dot\gamma, X_1,\dots, X_{2r-1}$ form an orthonormal frame, we have
\begin{equation}
\begin{split}
0\leq \frac{\pi^2}{4L}(2r-1)&-\int\limits_{-L}^{L} ds\; \cos^2\!\left(\frac{\pi}{2L}s\right)\mathrm{Ric}(\dot\gamma,\dot\gamma) \leq
\\
&\leq 
\frac{\pi^2}{4L}(2r-1) - \int\limits_{-\ell}^{\ell} ds\; \cos^2\!\left(\frac{\pi}{2L}s\right)\mathrm{Ric}(\dot\gamma,\dot\gamma)
\end{split}
\end{equation}
where we used that the Ricci form is non-negative and $\ell<L$.
One has
\be
\int\limits_{-\ell}^{+\ell} \frac{ds}{\cos^2\!\left(\frac{\pi}{2L}s\right)}=\frac{4L}{\pi}\tan\!\left(\frac{\pi \ell}{2L}\right),
\ee
so, using the estimate \eqref{hytqwert} with $\xi\equiv \gamma$, $f(s)=\cos^2(\pi s/2L)$,
$a=-\ell$ and $b=\ell$, we get
\be
\boldsymbol{d}(\tau(p_1),\tau(p_2))^2\leq \pi(2r-1)\tan\!\left(\frac{\pi\ell}{2L}\right)
\ee
for all pairs of points $p_1$, $p_2$ on the stable complete geodesic $\gamma\subset\mathscr{X}$ and all real $2L$ larger than the length of the geodesic arc between $p_1$ and $p_2$ in $\mathscr{X}$. The \textsc{lhs} is independent of $L$. Sending $L\to\infty$
the \textsc{rhs} goes to zero, so that $\tau(p_1)=\tau(p_2)$. Since $p_1$, $p_2$ are two arbitrary points in $\gamma$, the full geodesic $\gamma$ is contained in the same $\tau$-fiber $\mathscr{X}_{\mspace{-1mu}\boldsymbol{\tau}}$.
The last statement of the \textbf{Proposition} follows from the fact that $\gamma$ is locally length minimizing with respect to variations in the total space $\mathscr{X}$\!, and hence \emph{a fortiori} is length minimizing with respect to local
deformations restricted to the submanifold $\mathscr{X}_{\mspace{-1mu}\boldsymbol{\tau}}$.

\subsection{Other applications of Myers' argument}

One may apply Myers' argument to other situations and issues, but the results are less dramatic than for stable complete geodesics in the QFT special geometry.

\subsubsection{Improving the bound}

The bound \eqref{hytqwert} is rather poor. It is saturated when
 the curve
$\eta\equiv \tau\circ\xi\subset\mathscr{H}$ is a geodesic
and the rescaled velocity $v(s)\equiv f(s)|\dot\eta(s)|$ is constant.
The curve $\eta(s)\subset\mathscr{X}$ of actual interest,  which is the image of a geodesic in $\mathscr{X}$\!,
has a very different behavior. We saw in \textbf{Example \ref{exam}} that the image curve $\eta\subset\mathscr{H}$ is very far from being a geodesic. To get a less loose lower bound, in eq.\eqref{hytqwert} we replace the distance between the two endpoints,
$\boldsymbol{d}(p_1,p_2)$
by the length $\boldsymbol{L}(\eta;p_1,p_2)$ of the arc of the curve $\eta(s)$ which actually connects them.
 This length is significantly larger than the distance since
the curve is far from being geodesic. 
The rescaled velocity $v(s)$ is not constant, but it is not clear how to use this information to get a significantly improved bound.

\subsubsection{Semi-complete stable geodesics}
We consider a \emph{stable} geodesic $\gamma\colon[0,+\infty)\to \mathscr{X}$ which starts
at a finite-distance singularity for $s=0$ and then goes on forever. We choose four points $p_0=\gamma(\epsilon)$, $p_1=\gamma(\ell_1)$,
$p_2=\gamma(\ell_2)$ and $p_3=\gamma(L+\epsilon)$ (with $0<\epsilon<\ell_1<\ell_2<L+\epsilon$).
We proceed as before imposing that the Jacobi operator $\boldsymbol{J}$ is positive on the normal vector fields along $\gamma$ which vanish at $p_0$ and $p_3$.
We get the inequality
\be
\boldsymbol{L}(\eta;\tau(p_1),\tau(p_2))^2\leq  \frac{\pi(r-\tfrac{1}{2})\sin[\pi(\ell_2-\ell_1)/L]}{\sin[\pi (\ell_1-\epsilon)/L]\,\sin[\pi (\ell_2-\epsilon)/L]}
\ee
valid for all $0<\epsilon <\ell_1<\ell_2<L+\epsilon$. We may take the limit $\epsilon\to0$, set $\ell_2=\kappa L$
with $\kappa<1$ and take the limit $L\to\infty$ at fixed $\kappa$ and $\ell_1\equiv\ell$: we get $(\eta(s)\equiv\tau(\gamma(s))$)
\be
\boldsymbol{L}(\eta;\eta(\ell),\eta(L))\lessapprox \sqrt{(r-\tfrac{1}{2})\frac{L}{\ell}}\qquad \text{for}\ \ L\ggg\ell
\ee
a bound reminiscent of Brownian motion. Analogously, the distance between the image of two points a fixed distance $\delta$ along the geodesic goes to zero as we shift the pair to infinity:
setting $\ell_1=\ell$, $\ell_2=\ell+\delta$ and $L=\alpha \ell$ and taking
$\ell$ large at fixed $\alpha>1$ we get
\be\label{uytqwe}
\boldsymbol{d}(\eta(\ell),\eta(\ell+\delta))\leq\boldsymbol{L}(\eta;\eta(\ell),\eta(\ell+\delta))\lessapprox \sqrt{r-\tfrac{1}{2}}\left(\frac{\pi}{\alpha\,\sin^2(\pi/\alpha)}\right)^{\!\!1/2}\,\sqrt{\frac{\delta}{\ell}}\,,\qquad \ell\ggg\delta.
\ee
The best bound is obtained for $\alpha\approx 2.6953$ which yields $\approx 1.1747$
for the second (constant) factor in the \textsc{rhs}.

\subsubsection{Stable geodesics of finite length} Suppose the stable geodesic $\gamma$ has both ends at 
singular points, so that its length $L$ is finite. We write $\eta\colon(0,1)\to\mathscr{H}$
for the projected geodesic $\tau\circ\gamma\subset\mathscr{H}$ parametrized by arc-length re-normalized so that the total length of $\gamma$ is $1$. For all $0< a <b <1$ we have
\be
\boldsymbol{L}(\eta;\eta(a),\eta(b))^2\leq \pi(r-\tfrac{1}{2}) \frac{\sin(\pi(b-a))}{\sin(a\pi)\,\sin(b\pi)} 
\ee
independently of the details of the particular geodesic (e.g.\! independently of its length $L$). 
Roughly speaking: the image $\eta(s)$ of a stable finite geodesic is ``essentially universal''. 

\subsection{Weil-Petersson geometries of 3-CY moduli}

We briefly discuss the implications of the Myers' argument we used in the QFT set-up
for the SUGRA case, i.e.\! for the Weil-Petersson (WP) geometry of the moduli of Calabi-Yau 3-folds.
The result will be less dramatic than in the ``rigid'' case but, perhaps, still useful.

\medskip

The Myers' argument applies to \emph{stable} geodesics. If the \textbf{Sub-conjecture} holds,
all WP geodesics on the Calabi-Yau moduli are stable, and we can use the argument to study
 geodesics in full generality. However in this case the stability condition will be automatically satisfied, and hence cannot be too restrictive, and therefore will say little about the geodesics. We must expect the implications of the Myers argument to be ``weak''. The optimistic reader may wish to interpret 
 the weakness of Myers' bound in the Weil-Petersson geometry as ``evidence'' for the
\textbf{Sub-conjecture} and hence for its parent \textbf{Conjecture} \cite{R15}. 

\medskip

The main ingredient of the Myers argument for the QFT geometries
 was a holomorphic map $\tau$
from the simply-connected special K\"ahler manifold $\mathscr{X}$ to a K\"ahlerian Hadamard space $\mathscr{H}$
with the remarkable property that the Ricci form on $\mathscr{X}$ can be read from the pull-back via $\tau$ of the 
K\"ahler form on $\mathscr{H}$. In the Weil-Petersson geometry of Calabi-Yau moduli there is a map
with exactly these properties: it is the identity map
\be
\mathsf{id}\colon \mathscr{X}_\text{WP}\to \mathscr{X}_\text{H}
\ee 
seen as a homomorphic map from the covering moduli space $\mathscr{X}_\text{WP}$,
equipped with its Weil-Petersson metric -- which is a SUGRA special K\"ahler manifold --
 to the \emph{same} complex manifold
equipped with the Hodge metric\footnote{\ For the relation of the Hodge metric with the genus-1 2-point functions in topological strings, see \cite{R27b}.} \cite{R18,R23,R24,R25,R26,R27}\!\!\cite{R11} which now is a Hadamard K\"ahler space.
The Ricci form on $\mathscr{X}_\text{WP}$ is related to the WP and Hodge K\"ahler forms, $\omega_\text{WP}$ and $\omega_H$, by the formula \cite{R24,R25,R26,R27}
\be
\mathsf{Ric}_\text{WP}=-(m+3)\omega_\text{WP}+\omega_\text{H},
\ee
where $m\equiv\dim_\C\mspace{-1.5mu}\mathscr{X}$.

Let $\gamma(s)\subset \mathscr{X}$ be a geodesic for the WP metric parametrized by its arc-length.
If the geodesic arc between $\gamma(0)$ and $\gamma(L)$ is stable,  we have
the Myers inequality
\be\label{MMMMy}
\begin{split}
0\leq &\frac{\pi^2}{2L}(2m-1)-\int_0^L \sin^2\!\left(\frac{\pi s}{L}\right)\,\mathsf{Ric}_\text{WP}(\dot\gamma,\dot\gamma)\,ds\equiv\\
&\equiv \frac{\pi^2}{2L}(2m-1)+(m+3)\frac{L}{2}-\int_0^L \sin^2\!\left(\frac{\pi s}{L}\right)
\|\dot\gamma\|^2_\text{H}\,ds\leq \\
&\leq \frac{\pi^2}{2L}(2m-1)+(m+3)\frac{L}{2}-\int_a^b \sin^2\!\left(\frac{\pi s}{L}\right)
\|\dot\gamma\|^2_\text{H}\,ds
\end{split}
\ee
for all $0<a < b<L$. We use the same estimate of the integral as before. We write
$\boldsymbol{L}(\gamma;a,b)_\text{H}$ for the length of the arc of $\gamma$ from $\gamma(a)$ to $\gamma(b)$ as measured with the Hodge metric, and $\boldsymbol{L}(\gamma;a,b)_\text{WP}$
for its length measured with the WP metric: $\boldsymbol{L}(\gamma;a,b)_\text{WP}\equiv(b-a)L$. Eq.\eqref{MMMMy} then gives
\be
\boldsymbol{L}(\gamma;a,b)_\text{H}^2\leq \left(\frac{\pi^2}{2L}(2m-1)+(m+3)\frac{L}{2}\right)\int_a^b \frac{ds}{\sin^2(\pi s/L)}
\ee
for all $0<a < b<L$.
We may rewrite this inequality as an upper bound on the ratio of the lengths of the geodesic arc  between $a=\alpha L$ and $b=\beta L$ as measured by the Hodge metric, $\boldsymbol{L}(\gamma;\alpha L,\beta L)_\text{H}$, and by the Weil-Petersson one,
$\boldsymbol{L}(\gamma;\alpha L,\beta L)_\text{WP}$:
\begin{pro} Let $\gamma(s)\subset\mathscr{X}$ be a geodesic for the WP metric which is stable 
in the arc $0\leq s\leq L$.
For all $0<\alpha <\beta <1$ one has
\be\label{aaassss}
\left[\frac{\boldsymbol{L}(\gamma; \alpha L,\beta L)_\text{H}}{\boldsymbol{L}(\gamma; \alpha L,\beta L)_\text{WP}}\right]^{\!2}\leq \frac{\sin(\pi(\beta-\alpha))}{2\pi(\beta-\alpha)^2 \sin(\pi \alpha)\sin(\pi \beta)} \left(m+3+\frac{(2m-1)\pi^2}{L^2}\right).
\ee
\end{pro}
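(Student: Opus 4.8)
The plan is to read the claim off from the inequality
\be
\boldsymbol{L}(\gamma;a,b)_\text{H}^2\leq \left(\frac{\pi^2}{2L}(2m-1)+(m+3)\frac{L}{2}\right)\int_a^b \frac{ds}{\sin^2(\pi s/L)}\qquad (0<a<b<L)
\ee
which is already in hand. I recall briefly how it arises, since nothing new is needed there: one runs Myers' argument with the test fields $\sin(\pi s/L)\,X_i$, where $\{X_i\}_{i=1}^{2m-1}$ is a parallel orthonormal frame for the normal bundle of $\gamma$, obtaining the chain \eqref{MMMMy}; one inserts $\mathsf{Ric}_\text{WP}=-(m+3)\,\omega_\text{WP}+\omega_\text{H}$ evaluated on $(\dot\gamma,\dot\gamma)$, noting that $\|\dot\gamma\|_\text{WP}=1$ because $s$ is WP arc-length so the $-(m+3)\,\omega_\text{WP}$ piece integrates to the clean term $(m+3)L/2$; one restricts the remaining non-negative integral $\int_0^L\sin^2(\pi s/L)\,\|\dot\gamma\|_\text{H}^2\,ds$ to $[a,b]\subset(0,L)$; and one applies Cauchy--Schwarz in the guise $\boldsymbol{L}(\gamma;a,b)_\text{H}^2=\big(\int_a^b\|\dot\gamma\|_\text{H}\,ds\big)^2\le\int_a^b\frac{ds}{\sin^2(\pi s/L)}\,\int_a^b\sin^2(\pi s/L)\,\|\dot\gamma\|_\text{H}^2\,ds$. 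Everything after this is a short computation.

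First I would evaluate the elementary integral: since the antiderivative of $\sin^{-2}u$ is $-\cot u$, the substitution $u=\pi s/L$ gives
\be
\int_a^b\frac{ds}{\sin^2(\pi s/L)}=\frac{L}{\pi}\big(\cot(\pi a/L)-\cot(\pi b/L)\big)=\frac{L}{\pi}\,\frac{\sin\big(\pi(b-a)/L\big)}{\sin(\pi a/L)\,\sin(\pi b/L)},
\ee
the last step being the identity $\cot x-\cot y=\sin(y-x)/(\sin x\,\sin y)$. Then I would set $a=\alpha L$ and $b=\beta L$ (admissible because $0<\alpha<\beta<1$ keeps both endpoints strictly inside $(0,L)$, where the integrand is finite), so that the right-hand side becomes $\frac{L}{\pi}\,\sin(\pi(\beta-\alpha))/\big(\sin(\pi\alpha)\,\sin(\pi\beta)\big)$.

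Finally I would substitute into the displayed inequality, use $\boldsymbol{L}(\gamma;\alpha L,\beta L)_\text{WP}=(\beta-\alpha)L$, divide both sides by $\boldsymbol{L}(\gamma;\alpha L,\beta L)_\text{WP}^2=(\beta-\alpha)^2L^2$, and rewrite the Myers prefactor as $\frac{\pi^2}{2L}(2m-1)+(m+3)\frac{L}{2}=\frac{L}{2}\big(m+3+(2m-1)\pi^2/L^2\big)$. The factor $L$ produced by the integral combines with this $L/2$ and with the $L^2$ from $\boldsymbol{L}_\text{WP}^2$, so every explicit $L$ outside the parenthesis cancels and one is left precisely with
\be
\left[\frac{\boldsymbol{L}(\gamma;\alpha L,\beta L)_\text{H}}{\boldsymbol{L}(\gamma;\alpha L,\beta L)_\text{WP}}\right]^{2}\leq \frac{\sin(\pi(\beta-\alpha))}{2\pi(\beta-\alpha)^2\sin(\pi\alpha)\sin(\pi\beta)}\left(m+3+\frac{(2m-1)\pi^2}{L^2}\right),
\ee
which is the assertion. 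I do not expect any genuine obstacle: the statement is a rearrangement of an inequality already proved. The only points needing a moment's care are that the contraction of the $(1,1)$ Ricci form against $(\dot\gamma,\dot\gamma)$ obeys the same linear relation $\mathrm{Ric}_\text{WP}(\dot\gamma,\dot\gamma)=-(m+3)\,\|\dot\gamma\|_\text{WP}^2+\|\dot\gamma\|_\text{H}^2$ as the underlying Kähler forms, and the directions of the inequalities when passing from $[0,L]$ to $[a,b]$ (which use $\|\dot\gamma\|_\text{H}^2\ge0$ and $\sin^2(\pi s/L)\ge0$). It is perhaps worth stressing that here -- in contrast to the QFT case -- sending $L\to\infty$ does \emph{not} drive $\boldsymbol{L}_\text{H}/\boldsymbol{L}_\text{WP}$ to zero: the bound tends to the finite limit $\sin(\pi(\beta-\alpha))(m+3)/\big(2\pi(\beta-\alpha)^2\sin(\pi\alpha)\sin(\pi\beta)\big)$, which is exactly why the gravitational conclusion is so much weaker than the rigid one.
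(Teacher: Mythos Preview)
Your proposal is correct and follows essentially the same route as the paper: start from the Myers inequality \eqref{MMMMy}, restrict the Hodge-norm integral to $[a,b]$, apply the Cauchy--Schwarz/energy estimate (what the paper calls ``the same estimate of the integral as before,'' i.e.\ the length version of \eqref{hytqwert}), then evaluate $\int_a^b\sin^{-2}(\pi s/L)\,ds$, set $a=\alpha L$, $b=\beta L$, and divide by $(\beta-\alpha)^2L^2$. Your write-up is in fact more explicit than the paper's, which leaves the elementary integral and the final algebraic rearrangement to the reader.
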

We may specialize this inequality in various limits, depending on the nature of the stable geodesic: finite, semi-infinite, or complete.
In the finite case $L$ can be taken at most equal to the geodesic distance between the singularities at the endpoints. With this choice,
if we send the second point $\gamma(\beta L)$ toward the singular point, i.e.\! take $\beta\to 1$,
the \textsc{rhs} diverges corresponding to the fact that the WP length of the curve is finite while the
Hodge length is infinite (this length is greater than the Hodge distance which goes to $+\infty$
since the Hodge metric is complete). 

The SUGRA geometry result corresponding to \textbf{Proposition \ref{xxxy}} for the QFT geometry is:

\begin{corl} Suppose the geodesic $\gamma(s)$ is complete and stable (this second assumption is superfluous if {\bf Sub-conjecture} holds). There are two universal positive constant $c_1$ and $c_2$, depending only on the dimension $m$ of $\mathscr{X}\!$\!, such that, for all pair of points $p,q\in\gamma$ the ratio of the lengths of the arc between them
measured with the Hodge and SW metrics satisfies
\be\label{yyytt}
\left[\frac{\boldsymbol{L}(\gamma;p,q)_\textrm{H}}{\boldsymbol{L}(\gamma;p,q)_\textrm{WP}}\right]^{\!2}\leq c_1+\frac{c_2}{\boldsymbol{L}(\gamma;p,q)^2_\textrm{WP}}
\ee
\end{corl}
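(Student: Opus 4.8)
The plan is to deduce this directly from the already-established inequality \eqref{aaassss} by choosing the stability window $L$ proportional to the Weil--Petersson length of the arc $pq$, rather than letting $L\to\infty$. First I would write $d\equiv\boldsymbol{L}(\gamma;p,q)_\text{WP}$ for the WP length of the arc of $\gamma$ joining $p$ and $q$, and reparametrize $\gamma$ by WP arc-length so that $p=\gamma(d/2)$ and $q=\gamma(3d/2)$. Since $\gamma$ is complete it is defined on all of $[0,2d]$, and since it is stable it is in particular stable on the sub-arc $0\le s\le L$ with $L\equiv 2d$; hence \eqref{aaassss} applies with this $L$ and with $\alpha=1/4$, $\beta=3/4$, which label precisely the points $p=\gamma(\alpha L)$ and $q=\gamma(\beta L)$.

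Substituting these values one has $\beta-\alpha=\tfrac12$, $(\beta-\alpha)L=d$, $\sin(\pi(\beta-\alpha))=1$ and $\sin(\pi\alpha)\sin(\pi\beta)=\tfrac12$, so the right-hand side of \eqref{aaassss} collapses to
\be
\frac{4}{\pi}\left(m+3+\frac{(2m-1)\pi^2}{4d^2}\right)=\frac{4(m+3)}{\pi}+\frac{(2m-1)\pi}{d^2},
\ee
while its left-hand side is exactly $\big[\boldsymbol{L}(\gamma;p,q)_\text{H}\,/\,\boldsymbol{L}(\gamma;p,q)_\text{WP}\big]^2$. This is \eqref{yyytt} with the explicit universal constants $c_1=4(m+3)/\pi$ and $c_2=(2m-1)\pi$, both positive for every $m\ge 1$ (for $m=0$ the moduli space is a point and there is nothing to prove).

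The step that requires any thought — the main ``obstacle'' — is precisely this choice of $L$: the bound \eqref{aaassss} degrades like $L/d$ for $L\gg d$, so naively enlarging the stability window (as one does in the rigid QFT case to force $\tau(p)=\tau(q)$ in \textbf{Proposition \ref{xxxy}}) would only produce a divergent, useless estimate here. One must keep $L$ comparable to $d$; the symmetric placement $L=2d$, $\alpha=1/4$, $\beta=3/4$ is the cleanest, since it makes $\tan(\pi d/(2L))=\tan(\pi/4)=1$ and maximizes $\sin(\pi\alpha)\sin(\pi\beta)$. One could slightly sharpen $c_1,c_2$ by optimizing the ratio $L/d=c$ (the optimal first-term coefficient comes from the root of $\sin 2x=x$, i.e.\ $x=\pi/(2c)\approx 0.948$, hence $c\approx 1.66$), and, for non-complete geodesics, by allowing asymmetric placements of $p,q$ near the singular ends; but since the statement only asserts the existence of dimension-dependent constants, no such optimization is needed.
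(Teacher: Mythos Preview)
Your proof is correct and essentially identical to the paper's own argument: the paper likewise sets $L=2\,\boldsymbol{L}(\gamma;p,q)_\text{WP}$, places $p=\gamma(L/4)$, $q=\gamma(3L/4)$, applies \eqref{aaassss} with $\alpha=1/4$, $\beta=3/4$, and obtains the same constants $c_1=4(m+3)/\pi$, $c_2=\pi(2m-1)$. Your added remarks on why $L$ must be kept comparable to $d$ and on possible optimizations go slightly beyond what the paper spells out, but the core argument is the same.
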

\begin{proof} Set $L=2\,\boldsymbol{L}(\gamma;p,q)^2_\textrm{WP}$ and choose the parametrization so that
$p=\gamma(L/4)$ and $q=\gamma(3L/4)$. Then apply eq.\eqref{aaassss} to the stable geodesic arc between
$\gamma(0)$ and $\gamma(L)$ with $\alpha=1/4$ and $\beta=3/4$. We get \eqref{yyytt} with
\be
c_1=\frac{4}{\pi}(m+3),\qquad c_2=\pi(2m-1).
\ee 
One may  improve the constants.
\end{proof}
Thus, up to a universal change of normalization, asymptotically for large WP distances, the 
Hodge lengths of arcs of (stable) complete WP geodesics are not greater than their SW lengths.

\section*{Acknowledgments}
I have benefit of several discussions with Cumrun Vafa who prompted me to study special geometry from a Riemannian point of view. I thank him for sharing with me some of his deep insights on physics.

\appendix

\section{Geodesics near codimension-1 singularities}

As an illustration of several points raised in the main text, we study the asymptotic behavior of geodesics
near singularities in dimension one. Under \textbf{Condition} the finite cover $\cn$ is a smooth
variety of the form $\cn=\overline{\cn}\setminus D$ where $\overline{\cn}$ is smooth projective and $D$
a normal crossing divisor. We consider an irreducible component of $D$ which is at finite distance
in the special K\"ahler metric, and focus on a small neighborhood $U\subset\overline{\cn}$ of a \emph{generic} point on it (remaining far away from the crossing with other components where more intricate phenomena may happen). Restricting $U$, if necessary, we may
assume $U\cap \cn \simeq \Delta^*\times \Delta^{r-1}$ where $\Delta$ is a small disk and $\Delta^*$
is a small punctured disk. One can find special coordinates $\{a^1,a^2\dots,a^r\}$ in $\Delta^*\times \Delta^{r-1}$ ($a^1\in \Delta^\ast$, $a^i\in\Delta$ for $i\geq2$) 
so that the special metric locally reads \cite{R28}
\be
ds^2= -\frac{n}{2\pi}\log\mspace{-1mu} |a_1|\, da_1\, d\bar a_1+ \mathrm{Im}\!\left[\frac{\partial^2 \Psi(a)}{\partial a^i\,\partial a^j}\right] da^i d\bar a^j,\quad i,j=1,\dots,r,
\ee
where the function $\Psi(a)$ extends holomorphically over $U\simeq \Delta^r$ and $n$ is a positive integer. If we are only interested in the asymptotic behavior at a codimension-1 singularity, we
can neglect the second term and study the geodesics of the rescaled metric 
\be\label{ghdf}
ds^2=-\log r\big(dr^2+r^2 d\theta^2\big)
\ee
in the disk. This is a special K\"ahler geometry which does \emph{not} satisfy \textbf{Condition}.
Here we see what goes wrong when the \textbf{Condition} does not hold. In the unit disk 
the scalar curvature
\be
R=\frac{1}{2(-\log r)^3\,r^2}
\ee
has a genuine finite-distance singularity at $r=0$ but also a spurious one at $r=1$. The \textbf{Condition} gets rid of spurious finite-distance singularities, and in this sense replaces completeness which is ruled out for non-trivial special geometries. The period map is
\be
(r,\theta)\mapsto \frac{n}{2\pi}\theta+\frac{1}{2\pi i}\log r\in \mathscr{H}.
\ee

The geodesics Hamilton-Jacobi equation \cite{R29,R30} for the asymptotic metric \eqref{ghdf}
\be
\left(\frac{\partial S}{\partial r}\right)^{\!\!2}+\frac{1}{r^2}\!\left(\frac{\partial S}{\partial\theta}\right)^{\!\!2}+\log r=0
\ee 
can be reduced to quadrature by separation of variables
\be
S(r,\theta;J)=J\,\theta +\int_{r_-}^r \sqrt{-\log s-\frac{J^2}{s^2}}\;ds
\ee
The geodesics are parametrized by the `angular momentum' $J$ and take the form
\be\label{ggggeeedd}
\theta(r)=\theta_0+J\int_{r_-}^r \frac{ds}{s\sqrt{-s^2\log s-J^2}}
\ee
There are two kinds of geodesics: those with $J=0$ are radial lines at constant
angle, $\theta(r)=\theta_0$. They end up in the singularities at $r=0$ and at $r=1$.
Their length $L$, measured in the metric \eqref{ghdf}, is finite
\be
L\equiv\int_0^1 \sqrt{-\log r}\; dr =\frac{\sqrt{\pi}}{2}.
\ee
The geodesics with $J\neq0$ do not fall in the singularity at $r=0$ nor in the one at $r=1$: they are confined in the annulus $r_-<r<r_+$ where $0< r_-<r_+<1$ are the two zeros of the effective
potential $V_\text{eff}(r)=\log r+J^2/r^2$. These geodesics go around the singularity forever and so are \emph{complete.}
When pulled back to the cover $\mathscr{X}$ (i.e.\! by taking $\theta$ valued in $\R$ instead of $S^1$)
a complete geodesic will intersect an incomplete radial one exactly once since the angular velocity $\dot\theta$
is non-zero when $J\neq0$. We conclude 
\begin{fact} Non-complete geodesics in the asymptotic geometry \eqref{ghdf} are stable.
\end{fact}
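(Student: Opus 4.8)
\subsection*{Proof proposal for the \textbf{Fact}}

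The plan is to exploit the rotational symmetry of the model metric \eqref{ghdf} to write down an explicit \emph{nowhere-vanishing} Jacobi field along each non-complete geodesic, and then deduce positive-definiteness of the index form by a direct substitution. The non-complete geodesics are exactly the radial lines $\theta\equiv\theta_0$, $r\in(0,1)$; I parametrize such a $\gamma$ by arc-length $s\in(0,L)$, $L=\sqrt\pi/2$, via $ds=\sqrt{-\log r}\,dr$. Since $\mathscr{X}$ is here a real surface, a normal field along $\gamma$ has the form $V=u(s)\,N(s)$ with $N$ the parallel unit normal, and (after integrating by parts, using the vanishing of $u$ at the endpoints)
\[
\langle V,\boldsymbol{J}V\rangle=\int_{s_1}^{s_2}\big(u'(s)^2-K(s)\,u(s)^2\big)\,ds,
\]
where $K=\tfrac12R=\dfrac{1}{4(-\log r)^3r^2}$ is the Gaussian curvature restricted to $\gamma$. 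Note $K>0$, so the naive ``non-positive curvature $\Rightarrow$ no conjugate points'' criterion is unavailable and one must use the special structure.

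That structure is that $\partial_\theta$ is a Killing vector of \eqref{ghdf}; hence its restriction to $\gamma$ is a Jacobi field. It is orthogonal to $\dot\gamma\propto\partial_r$, so $\partial_\theta|_\gamma=\phi(s)\,N(s)$ with $|\phi(s)|=\lVert\partial_\theta\rVert=r(s)\sqrt{-\log r(s)}$; choosing the sign of $N$ appropriately, $\phi(s)=r(s)\sqrt{-\log r(s)}>0$ for every $s\in(0,L)$ (it degenerates only in the limits $r\to0$ and $r\to1$, which are the finite-distance singular endpoints, not points of $\gamma$). Being a Jacobi field, $\phi$ solves $\phi''+K\phi=0$ and is nowhere zero on the open geodesic.

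Now fix any two points $p=\gamma(s_1)$, $q=\gamma(s_2)$ of $\gamma$ and any normal field $V=uN$ with $u(s_1)=u(s_2)=0$. Since $\phi>0$ on the compact sub-arc $[s_1,s_2]$, I may write $u=\phi\,w$ with $w(s_1)=w(s_2)=0$; using $\phi''=-K\phi$ one gets the pointwise identity $u'^2-Ku^2=\phi^2(w')^2+\tfrac{d}{ds}\!\big(\phi\,\phi'\,w^2\big)$, so that, the total-derivative term contributing nothing because $w$ vanishes at $s_1,s_2$,
\[
\langle V,\boldsymbol{J}V\rangle=\int_{s_1}^{s_2}\phi(s)^2\,w'(s)^2\,ds\ \ge\ 0,
\]
with equality only if $w'\equiv0$, i.e.\ $w\equiv0$ (as $w(s_1)=0$), i.e.\ $V\equiv0$. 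Hence $\boldsymbol{J}$ is positive-definite between $p$ and $q$; since $p,q$ were arbitrary points of $\gamma$, the geodesic $\gamma$ is stable.

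I do not anticipate a real obstacle; the only points needing a little care are (i) that $\phi$, being continuous and strictly positive on $(0,L)$, is bounded away from $0$ on each compact sub-arc, which legitimizes the substitution $u=\phi w$ and the vanishing of the boundary term, and (ii) that the degeneration of $\phi$ at $r\in\{0,1\}$ is harmless because those are singular points excluded from $\gamma$, so stability is only ever tested between interior points. (The same computation makes transparent why the argument fails for the $J\neq0$, hence complete, geodesics: there one would need positivity on arbitrarily long sub-arcs, which the oscillation forced by $K>0$ prevents --- consistent with \textbf{Proposition \ref{mainP}}.)
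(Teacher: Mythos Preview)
Your argument is correct. The paper takes a different, more global route: having classified all geodesics of \eqref{ghdf} by their angular momentum $J$, it observes that the non-complete ones are precisely the radial lines ($J=0$), while every $J\neq0$ geodesic has $\dot\theta\neq0$ everywhere and hence, on the cover where $\theta\in\R$, meets each radial line $\{\theta=\theta_0\}$ exactly once. Since any geodesic variation of a radial line based at a point $p$ consists of $J\neq0$ geodesics, none of them can return to the radial, so no point of the radial is conjugate to $p$; stability follows (implicitly via the Morse index theorem). Your approach is more self-contained for the statement at hand: you extract the positive Jacobi field $\phi=\lVert\partial_\theta\rVert=r\sqrt{-\log r}$ directly from the rotational Killing symmetry and convert its positivity into positive-definiteness of the index form by the classical substitution $u=\phi w$, which also makes the strict inequality (hence genuine \emph{stability}, not just absence of conjugate points) explicit. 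The paper's route, by contrast, dovetails with its running theme of counting geodesic crossings and feeds naturally into the subsequent instability analysis of the $J\neq0$ geodesics.
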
  
This \textbf{Fact} was already noticed in \cite{R15}. On the contrary the complete geodesics with $J\neq0$ cannot be stable by \textbf{Proposition \ref{xxxy}}. The complete geodesics are parametrized by the angular momenta $J>0$ such that $V_\text{eff}(r)$ has two real zeros in the interval $(0,1)$. When these zeros coincide
$r_-=r_+=r_0$ we have a circular geodesic and the Gauss-Bonnet theorem yields
\be\label{gggsss}
1=\frac{1}{2\pi}\int_{r\leq r_0} \!\!\!\mathsf{Ric} \equiv\frac{1}{2}\int_0^{r_0} \frac{dr}{(\log r)^2\,r}=\frac{1}{2(-\log r_0)}
\ee
To describe a complete geodesic $J$ must be such that $\min_r V_\text{eff}(r)\leq 0$ i.e.
\be
0<J\leq J_c\equiv \frac{1}{\sqrt{2e}}
\ee
the minimum being attained at $r=\sqrt{2} J$. See figure \ref{figure}.
When the inequality is saturated $r_-=r_+$ and we have a circular orbit with
radius $r_0=\frac{1}{\sqrt{e}}$ in agreement with the Gauss-Bonnet formula \eqref{gggsss}. Let us check that the complete geodesic are not stable. It is enough to check the property for the circular one; then the statement is a particular case of Bertrand theorem in Mechanics \cite{R30}.
We consider an angular momentum just below the
 critical value
 \be
J^2=\frac{1}{2e}(1-2\epsilon^2)\qquad \epsilon\lll1.
\ee
We parametrize the radius $r$ in terms of a new coordinate $u$
\be
r=\frac{1-\epsilon\, u}{\sqrt{e}}
\ee
 so that
 \be
 -\log r-\frac{J^2}{r^2}=\epsilon^2(1-u^2) +O(\epsilon^3)
 \ee
 Replacing in eq.\eqref{ggggeeedd} we get
 \be
 \theta=\theta_0+\frac{1}{\sqrt{2}}\int_{u_0}^{u(\theta)}\frac{du}{\sqrt{1-u^2}}+O(\epsilon)
 \ee
  and performing the elementary integral, 
 we get for $\epsilon\approx0$
 \be
 r=\frac{1}{\sqrt{e}}\big(1-\epsilon \sin[\sqrt{2}(\theta-\theta_0)]+O(\epsilon^2)\big)
 \ee
 which is just the small oscillations around the circular orbit $r=1/\sqrt{e}$ corresponding to
 a non-zero Jacobi field with zeros at $\theta=\theta_0+\pi k/\sqrt{2}$ ($k\in\Z$).

\begin{figure}
$$
\includegraphics[width=0.34\textwidth]{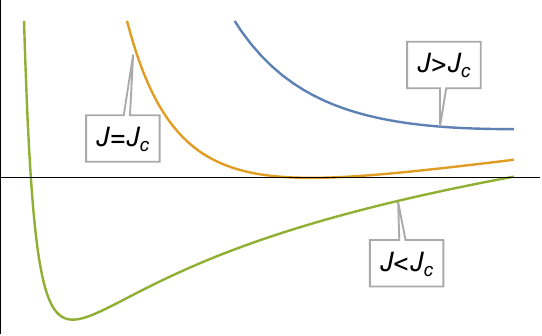}
$$
\caption{\label{figure} The effective radial potential $V_\text{eff}(t)$ for various values of $J$}
\end{figure}

\end{document}